\pgfplotsset{compat=1.15}
\pgfplotsset{compat=1.15}
\newtheorem{theo}{Theorem}
\newtheorem{obs}[theo]{Observation}
\newtheorem{lem}[theo]{Lemma}
\theoremstyle{definition}
\newtheorem{case}{Case}[theo]
\newcommand*{\QEDA}{\hfill\ensuremath{\blacksquare}}
\begin{document}
\title{ Representation and Obstruction characterization of co-TT graphs}
\def\correspondingauthor{\footnote{Corresponding author}}
\author{Ashok Kumar Das\correspondingauthor{}  and  Indrajit Paul\\Department of Pure Mathematics\\
University of Calcutta\\
35, Ballygunge Circular Road\\
Kolkata-700019\\
Email Address - ashokdas.cu@gmail.com \&
paulindrajit199822@gmail.com}
\maketitle
\begin{abstract}
  Threshold tolerance graphs and their complement graphs, known as co-TT graphs, were introduced by Monma, Reed, and Trotter\cite{24}. Building on this, Hell et al.\cite{19} introduced the concept of negative interval. Then they proceeded to define signed-interval digraphs/ bigraphs, demonstrating their equivalence to several seemingly distinct classes of digraphs/ bigraphs.  They also showed that co-TT graphs are equivalent to symmetric signed-interval digraphs, where some vertices of the digraphs have loops and others do not. We have showed that this actually solve the representation characterization problem of co-TT graphs posed by Monma, Reed and Trotter \cite{24}.

In this paper, we characterize signed-interval bigraphs and signed-interval graphs in terms of their biadjacency matrices and adjacency matrices, respectively. Moreover we emphasize on the geometric representation of signed-interval graphs, i.e. co-TT graphs. Finally, by utilizing the geometric representation of signed-interval graphs, we resolve the open problem of characterizing  co-TT graphs in terms of minimal forbidden induced subgraphs, a problem initially posed by Monma, Reed, and Trotter in the same paper.
\end{abstract}
\par\noindent \textbf{Keywords:} interval graphs, signed-interval graphs, negative interval, co-TT graphs, forbidden induced subgraphs.
\section{Introduction}
A graph $G = (V, E)$ is classified as an \textit{interval graph} if, for each vertex $v \in V$, we can associate an interval on the real number line in such a way that two vertices are connected by an edge if and only if their corresponding intervals intersect. Interval graphs hold significant importance in the field of graph theory due to their several elegant characterizations and the existence of efficient recognition algorithms \cite{1, 3, 4, 10, 12, 13, 18, 20, 23, 26}. Furthermore, they find practical applications in various real-world problems.\par

An interval digraph is a \textit{directed graph} representable by assigning each vertex $v$ an ordered pair $(S_v, T_v)$ of closed intervals so that $uv$ is a (directed) edge if and only if $S_u$ intersects $T_v$. The intervals $S_v$ and $T_v$ are the \textit{source interval} and the \textit{sink interval} for $v$. A bipartite graph (in short, bigraph) $B = (X, Y, E)$ is an interval bigraph if there exists a one to one correspondence between the vertex set $X \cup Y$ of $B$ and a collection of intervals $\{I_v: v \in X \cup Y\}$ so that $xy \in E$, if and only if $I_x \cap I_y \neq \emptyset$, where $x\in X$ and $y\in Y$.\par

The \textit{biadjacency matrix} of a bipartite graph is the submatrix of its adjacency matrix consisting of the rows indexed by the vertices of one partite set and columns by the vertices of another. Interval digraphs and interval bigraphs were initially introduced in \cite{27} and \cite{16}, respectively. As observed in \cite{5}, these two concepts are equivalent.\par

The key observation is that the adjacency matrix of an interval digraph can be transformed into the biadjacency matrix of an interval bigraph, and conversely, the biadjacency matrix of an interval bigraph can be converted into the adjacency matrix of an interval digraph. This transformation may require the addition of rows or columns of '0s' to ensure that the resulting matrix is square. \par
\indent\hspace{.05in}Next we switch over to interval graphs. Let $G= (V, E)$ be an interval graph and $\{I_v : v \in V\}$ be its interval representation. Assume $l(v)$ and $r(v)$ respectively denote the left end point and right end point of $I_v$. It can be observed that two intervals $I_u$ and $I_v$ intersect if and only if left end point of each interval is less than or equal to the right end point of the other. Thus $I_u \cap I_v \neq \emptyset$ if and only if $l(u) \leq r(v)$ and $l(v) \leq r(u)$.\par
Here $l(v)$ and $r(v)$ respectively denote the left end point and right end point of $I_v$. Again for $I_v$ to be an interval it is needed that $l(v) \leq r(v)$.\\
\par Based on this observation, Hell et al. \cite{19} proposed a generalization of the interval model by relaxing the constraint $l(v) \leq r(v)$. They introduced the concepts of positive and negative intervals as follows: An interval $I_v$ is classified as a \textit{positive interval} if $l(v) \leq r(v)$, and it is denoted by $I^+_v$. Conversely, $I_v$ is considered a \textit{negative interval} if $l(v) > r(v)$, and it is denoted by $I^-_v$.\par
The motivation of the concept of negative interval comes from the definition of complement of Threshold tolerance graph \cite{24}. A graph $G$ is a \textit{Threshold tolerance graph} if corresponding to each vertex $v$ we can assign a positive weight $w_v$ and a positive tolerance $t_v$ so that $uv$ is an edge if and only if $w_u + w_v$ $>$ $min\{t_u, t_v\}$.  Threshold tolerance graphs are threshold graphs when we assume all the tolerances to be equal. In other words threshold tolerance graphs are the generalization of threshold graphs. 
\par The complements of threshold tolerance graphs are called \textit{co-TT graphs}. This class of graphs not only contains all threshold graphs ( since complement of a threshold graph is a threshold graph) but also related to other well-studied classes of graphs (like interval graphs). The following characterization of co-TT graphs is due to Monma, Reed and Trotter\cite{24}.\\
\begin{theo}
A graph G is a co-TT graph if and only if we can assign positive numbers $a_v$ and $b_v$ for each $v \in V$ such that,
 $$xy \in E(G) \Leftrightarrow a_x \leq b_y \text{ and } a_y \leq b_x \hspace{1cm}\ldots (1)$$
\end{theo}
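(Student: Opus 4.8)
The plan is to prove both implications by an explicit change of variables relating the weight--tolerance data in the definition of a threshold tolerance graph (applied to $\overline{G}$) to the pair of positive numbers $(a_v,b_v)$ appearing in $(1)$.

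I would begin with the easier direction. Suppose positive numbers $a_v,b_v$ are given with $xy\in E(G)\iff a_x\le b_y$ and $a_y\le b_x$. Put $w_v=a_v$ and $t_v=a_v+b_v$; both are positive. For $u\ne v$ one has $w_u+w_v\le\min\{t_u,t_v\}$ precisely when $a_u+a_v\le a_u+b_u$ and $a_u+a_v\le a_v+b_v$, that is, when $a_v\le b_u$ and $a_u\le b_v$, that is, when $uv\in E(G)$. Hence $uv\in E(\overline{G})\iff w_u+w_v>\min\{t_u,t_v\}$, so $\overline{G}$ is a threshold tolerance graph and $G$ is co-TT.

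For the converse, assume $G$ is co-TT, so $\overline{G}$ is a threshold tolerance graph with positive weights $w_v$ and positive tolerances $t_v$. Negating the defining inequality of $\overline{G}$, for every $u\ne v$,
$$uv\in E(G)\iff w_u+w_v\le\min\{t_u,t_v\}\iff w_v\le t_u-w_u\ \text{ and }\ w_u\le t_v-w_v .$$
This suggests setting $a_v=w_v$ and $b_v=t_v-w_v$, which would finish the proof at once --- except that $b_v$ need not be positive. Dealing with this is the only point that needs genuine care, and the main obstacle in the argument.

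The resolution is a short observation: if $t_v\le w_v$, then for every $u$ we have $w_u+w_v>w_v\ge t_v\ge\min\{t_u,t_v\}$, so $uv\in E(\overline{G})$ for all $u$; equivalently, every such $v$ is isolated in $G$. So split $V$ into $V_1=\{v:t_v>w_v\}$ and $V_2=\{v:t_v\le w_v\}$. On $V_1$ the assignment $a_v=w_v$, $b_v=t_v-w_v$ is legitimate and, by the displayed equivalence, reproduces exactly the edges among vertices of $V_1$; since all vertices of $V_2$ are isolated in $G$, these are all the edges of $G$. Finally I would assign values to the vertices of $V_2$ so as not to create spurious edges: taking $b_v=1$ and $a_v=1+\max_{u\in V_1}b_u$ for each $v\in V_2$ (and simply $a_v=2$, $b_v=1$ if $V_1=\emptyset$) gives $a_v>b_u$ for every other vertex $u$, so the condition $a_v\le b_u$ fails and $v$ remains isolated, as required. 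Combining the two directions proves the theorem.
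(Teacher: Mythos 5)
Your proposal is correct. Note, however, that the paper does not prove this statement at all: Theorem 3 is quoted as a known characterization due to Monma, Reed and Trotter \cite{24}, so there is no in-paper argument to compare yours against. Your argument is the natural self-contained one: the translation $w_v=a_v$, $t_v=a_v+b_v$ in one direction and $a_v=w_v$, $b_v=t_v-w_v$ in the other, with the equivalence $w_u+w_v\le\min\{t_u,t_v\}\iff a_v\le b_u \text{ and } a_u\le b_v$ doing all the work. You also correctly isolated the one genuine issue, namely that $b_v=t_v-w_v$ need not be positive, and your resolution is sound: if $t_v\le w_v$ then $w_u+w_v>w_v\ge t_v\ge\min\{t_u,t_v\}$ for every $u$, so such a vertex is universal in $\overline{G}$ and hence isolated in $G$, and reassigning it $b_v=1$ and $a_v$ larger than every $b_u$ used so far (with the trivial adjustment when $V_1=\emptyset$) introduces no spurious edges, since the condition $a_v\le b_u$ already fails. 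Both directions check out, including positivity of all assigned numbers, so the proof is complete.
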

For a co-TT graph $G$ the above condition is equivalent to assignment of a pair of positive numbers $(a_v, b_v)$ to each vertex $v$ of $G$, where it is possible that $a_v \leq b_v$ or $a_v > b_v$. If $a_v \leq b_v$, we say that the pair of positive numbers $(a_v, b_v)$ forms a positive interval ${I_v}^+ = [a_v, b_v]$ and in the other case if $a_v > b_v$ then we say that the pair $(a_v, b_v)$ forms a negative interval ${I_v}^- = [a_v, b_v]$. It is easy to observe that when two intervals ${I_x}^+ = [a_x, b_x]$ and ${I_y}^+ = [a_y, b_y]$ intersect then the condition $(1)$ is satisfied. Also if one of the two intervals, say ${I_x}^-$ (or, ${I_y}^-$) is a negative interval and ${I_x}^- \subseteq {I_y}^+$ $(or {I_y}^- \subseteq {I_x}^+ )$ then also the condition $(1)$ is satisfied. Thus we are in a position to give a general definition of interval graphs which Hell et al. \cite{19} called signed- interval graphs. A graph $G = (V, E)$ is a \textit{signed-interval graph} if corresponding to each vertex $v$ we assign either positive interval ${I_v}^+$ or negative interval ${I_v}^-$ such that $uv \in V(E)$ if and only if ${I_u}^+ \cap {I_v}^+ \neq \emptyset$ or ${I_v}^- \subseteq {I_u}^+$ or ${I_u}^- \subseteq {I_v}^+$.\par

\begin{figure}[H]
    \centering
    \begin{tikzpicture}[line cap=round,line join=round,x=1.0cm,y=1.0cm, scale=.5]
\clip(8.39,3.2) rectangle (32.7,22.79);
\draw [line width=0.2pt] (18.,18.)-- (18.,16.);
\draw [line width=0.2pt] (18.,16.)-- (16.,14.);
\draw [line width=0.2pt] (16.,14.)-- (20.,14.);
\draw [line width=0.2pt] (18.,16.)-- (20.,14.);
\draw [line width=0.2pt] (20.,14.)-- (22.,12.);
\draw [line width=0.2pt] (16.,14.)-- (14.,12.);
\draw [line width=0.2pt] (12.017991936014672,7.192721458372999)-- (16.01780410187087,7.192721458372999);
\draw [line width=0.2pt] (10.018085853086575,6.421673330015169)-- (12.017991936014672,6.397578076003987);
\draw [line width=0.2pt] (14.01789801894277,6.397578076003987)-- (20.017616267727064,6.421673330015169);
\draw [line width=0.2pt] (16.017804101870873,5.602434693634976)-- (22.01752235065517,5.602434693634976);
\draw [line width=0.2pt] (22.01752235065517,6.445768584026352)-- (24.01742843358327,6.397578076003987);
\draw [->,line width=1.5pt] (18.017710184798975,4.7831960572547825) -- (16.017804101870876,4.7831960572547825);
\draw [line width=1.5pt] (16.017804101870876,4.7831960572547825)-- (14.017898018942777,4.807291311265964);
\draw (17.4,19.2) node[anchor=north west,scale=1.3] {$u$};
\draw (21.9,12.9) node[anchor=north west, scale=1.4] {$v$};
\draw (20.1,14.8) node[anchor=north west,scale=1.3] {$y$};
\draw (17.9,16.8) node[anchor=north west, scale=1.4] {$x$};
\draw (14.6,14.8) node[anchor=north west,scale=1.4] {$z$};
\draw (12.6,12.7) node[anchor=north west,scale=1.2] {$w$};
\draw (9.5,9) node[anchor=north west] {$1$};
\draw (11.5,9) node[anchor=north west] {$2$};
\draw (13.5,9) node[anchor=north west] {$3$};
\draw (15.5,9) node[anchor=north west] {$4$};
\draw (17.5,9) node[anchor=north west] {$5$};
\draw (19.5,9) node[anchor=north west] {$6$};
\draw (21.5,9) node[anchor=north west] {$7$};
\draw (23.5,9) node[anchor=north west] {$8$};
\draw (10.6,7.2) node[anchor=north west,scale=1.] {$u$};
\draw (13.093802789189901,8.0) node[anchor=north west,scale=1.] {$x$};
\draw (16.8,7.5) node[anchor=north west,scale=1.] {$y$};
\draw (22.6,7.3) node[anchor=north west,scale=1.] {$w$};
\draw (18.6,6.4) node[anchor=north west,scale=1.] {$z$};
\draw (15.1,5.7) node[anchor=north west,scale=1.] {$v$};
\draw (27.0,12) node[anchor=north west,scale=1.2] {$u: [1,2]$};
\draw (27.0,10.9) node[anchor=north west,scale=1.2] {$x: [2,4]$};
\draw (27.0,9.8) node[anchor=north west,scale=1.2] {$y: [3,6]$};
\draw (27.0,8.7) node[anchor=north west,scale=1.2] {$z: [4,7]$};
\draw (27.0,7.6) node[anchor=north west,scale=1.1] {$w: [7,8]$};
\draw (27.0,6.5) node[anchor=north west,scale=1.2] {$v: [5,3]$};
\begin{scriptsize}
\draw [fill=black] (18.,18.) circle (2pt);
\draw [fill=black] (18.,16.) circle (2pt);
\draw [fill=black] (16.,14.) circle (2pt);
\draw [fill=black] (20.,14.) circle (2pt);
\draw [fill=black] (22.,12.) circle (2pt);
\draw [fill=black] (14.,12.) circle (2pt);
\draw [fill=black] (10.,8.) circle (2pt);
\draw [fill=black] (12.,8.) circle (2pt);
\draw [fill=black] (14.,8.) circle (2pt);
\draw [fill=black] (16.,8.) circle (2pt);
\draw [fill=black] (18.,8.) circle (2pt);
\draw [fill=black] (20.,8.) circle (2pt);
\draw [fill=black] (22.,8.) circle (2pt);
\draw [fill=black] (24.,8.) circle (2pt);
\draw [fill=black] (12.017991936014672,7.192721458372999) circle (0pt);
\draw [fill=black] (16.01780410187087,7.192721458372999) circle (0pt);
\draw [fill=black] (10.018085853086575,6.421673330015169) circle (0pt);
\draw [fill=black] (12.017991936014672,6.397578076003987) circle (0pt);
\draw [fill=black] (14.01789801894277,6.397578076003987) circle (0pt);
\draw [fill=black] (20.017616267727064,6.421673330015169) circle (0pt);
\draw [fill=black] (16.017804101870873,5.602434693634976) circle (0pt);
\draw [fill=black] (22.01752235065517,5.602434693634976) circle (0pt);
\draw [fill=black] (22.01752235065517,6.445768584026352) circle (0pt);
\draw [fill=black] (24.01742843358327,6.397578076003987) circle (0pt);
\draw [fill=black] (18.017710184798975,4.7831960572547825) circle (0pt);
\draw [fill=black] (14.017898018942777,4.807291311265964) circle (0pt);
\end{scriptsize}
\end{tikzpicture}
    \caption*{Fig. 1. A co-TT graph and its co-TT representation; observe that $yv$ is an edge as $3\leq 3$ and $5\leq 6$ but $xv$ is not an edge as $5\nleq 4$.}
    \label{fig:enter-label}
\end{figure}
When all the intervals $I_v$ are positive in the representation, we obtain an interval graph. Additionally, based on the aforementioned observation, we can conclude that signed-interval graphs are equivalent to co-TT graphs. Hence, this also settle the problem of representation characterization for co-TT graphs posed by Monma, Reed and Trotter \cite{24}.\\
A graph $G$ is \textit{chordal} if it does not contain chordless cycle $C_n(n\geq 4)$. A vertex $v$ of $G$ is called a \textit{simplicial vertex} if its adjacent vertices induces a complete graph. Every chordal graph must have a simplicial vertex \cite{7}. Faber\cite{8} introduced a subclass of chordal graphs called strongly chordal graphs. A vertex $v$ of a chordal graph is a \textit{simple vertex} if the vertices in its closed neighbor are linearly ordered by inclusion of their closed neighborhood. A graph $G$ is a \textit{strongly chordal graph} if every induced subgraph of $G$ has a simple vertex. It is easy to see that every simple vertex is also a simplicial vertex of $G$.

\begin{figure}[H]
    \centering
    \begin{tikzpicture}[line cap=round,line join=round,x=1.0cm,y=1.0cm,scale=1.]
\clip(4,5) rectangle (12,11);
\draw [line width=0.2pt] (6.,8.)-- (10.,8.);
\draw [line width=0.2pt] (8.,10.)-- (6.,8.);
\draw [line width=0.2pt] (8.,10.)-- (10.,8.);
\draw [line width=0.2pt] (6.,8.)-- (6.,6.);
\draw [line width=0.2pt] (6.,6.)-- (10.,6.);
\draw [line width=0.2pt] (10.,8.)-- (10.,6.);
\draw [line width=0.2pt] (8.,8.)-- (8.,10.);
\draw [line width=0.2pt] (8.,8.)-- (8.,6.);
\draw [line width=0.2pt] (10.,8.)-- (8.,6.);
\draw [line width=0.2pt] (8.,8.)-- (10.,6.);
\draw [line width=0.2pt] (8.,8.)-- (6.,6.);
\draw [line width=0.2pt,] (6.,8.)-- (8.,6.);
\draw [shift={(6.,8.)},line width=0.2pt]  plot[domain=-0.7853981633974483:0.7853981633974483,variable=\t]({1.*2.8284271247461903*cos(\t r)+0.*2.8284271247461903*sin(\t r)},{0.*2.8284271247461903*cos(\t r)+1.*2.8284271247461903*sin(\t r)});
\draw (5.2,6.3) node[anchor=north west,scale=1.2] {$v_1$};
\draw (5.2,8.3) node[anchor=north west,scale=1.2] {$v_2$};
\draw (7.7,10.6) node[anchor=north west,scale=1.2] {$v_3$};
\draw (10.0,8.3) node[anchor=north west,scale=1.2] {$v_7$};
\draw (10.0,6.3) node[anchor=north west,scale=1.2] {$v_6$};
\draw (7.3,8.6) node[anchor=north west,scale=1.2] {$v_4$};
\draw (7.7,5.9) node[anchor=north west,scale=1.2] {$v_5$};
\begin{scriptsize}
\draw [fill=black] (6.,8.) circle (2.pt);
\draw [fill=black] (10.,8.) circle (2.pt);
\draw [fill=black] (8.,10.) circle (2.pt);
\draw [fill=black] (6.,6.) circle (2.pt);
\draw [fill=black] (10.,6.) circle (2.pt);
\draw [fill=black] (8.,8.) circle (2.pt);
\draw [fill=black] (8.,6.) circle (2.pt);
\end{scriptsize}
\end{tikzpicture}
    \caption*{Fig. 2. A strongly chordal graph and $v_1$ is a simple vertex, since $N[v_1]=\{ v_1, v_2, v_4, v_5\}$ and $N[v_1]\subseteq N[v_2]\subseteq N[v_4]= N[v_5]$.}
    \label{fig:enter-label}
\end{figure}
Faber \cite{8} also provided a characterization of strongly chordal graphs in terms of forbidden induced subgraphs. A \textit{trampoline} is a graph $G$ created by taking an even cycle $v_1, v_2, v_3, ..., v_{2k}, v_1$, and then adding edges between the vertices with even subscripts, so that the vertices $v_2, v_4, ...v_{2k}$ form a complete subgraph. The graph shown in Fig. 3 is an example of a trampoline with $k = 4$. Additionally, the graph depicted in  Fig. 8 is a trampoline with $k = 3$. A trampoline consisting of $2k$ vertices is also referred to as a $k$-\textit{sun}.\\
\begin{figure}[H]
    \centering
    \begin{tikzpicture}[line cap=round,line join=round,x=1.0cm,y=1.0cm,scale=1.]
\clip(-1.,0.) rectangle (5.,6.007295261072009);
\draw [line width=.2pt] (1.,4.)-- (3.,4.);
\draw [line width=.2pt] (1.,4.)-- (2.,5.);
\draw [line width=.2pt] (2.,5.)-- (3.,4.);
\draw [line width=.2pt] (1.,4.)-- (1.,2.);
\draw [line width=.2pt] (1.,2.)-- (3.,2.);
\draw [line width=.2pt] (3.,4.)-- (3.,2.);
\draw [line width=.2pt] (1.,4.)-- (3.,2.);
\draw [line width=.2pt] (1.,2.)-- (3.,4.);
\draw [line width=.2pt] (1.,2.)-- (2.,1.);
\draw [line width=.2pt] (3.,2.)-- (2.,1.);
\draw [line width=.2pt] (3.,4.)-- (4.,3.);
\draw [line width=.2pt] (3.,2.)-- (4.,3.);
\draw [line width=.2pt] (1.,4.)-- (0.,3.);
\draw [line width=.2pt] (0.,3.)-- (1.,2.);
\draw (1.8,5.49) node[anchor=north west] {$v_1$};
\draw (3.1,4.3) node[anchor=north west] {$v_2$};
\draw (4.,3.3) node[anchor=north west] {$v_3$};
\draw (3.,2.05) node[anchor=north west] {$v_4$};
\draw (1.8,1) node[anchor=north west] {$v_5$};
\draw (0.4,2.1) node[anchor=north west] {$v_6$};
\draw (-0.7,3.18) node[anchor=north west] {$v_7$};
\draw (0.3,4.4) node[anchor=north west] {$v_8$};
\begin{scriptsize}
\draw [fill=black] (1.,4.) circle (2.5pt);
\draw [fill=black] (3.,4.) circle (2.5pt);
\draw [fill=black] (2.,5.) circle (2.5pt);
\draw [fill=black] (1.,2.) circle (2.5pt);
\draw [fill=black] (3.,2.) circle (2.5pt);
\draw [fill=black] (2.,1.) circle (2.5pt);
\draw [fill=black] (4.,3.) circle (2.5pt);
\draw [fill=black] (0.,3.) circle (2.5pt);
\end{scriptsize}
\end{tikzpicture}
    \caption*{Fig. 3. The Sun graph $S_4$}\label{fig3}
    \label{fig:enter-label}
\end{figure}

\begin{theo}[\cite{8}]
A chordal graph $G$ is strongly chordal graph if and only if $G$ does not contain a trampoline as an induced subgraph .
\end{theo}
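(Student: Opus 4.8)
My plan is to prove the two implications separately; the reverse implication carries essentially all of the work.

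\emph{Forward direction (a strongly chordal graph has no induced trampoline).} Here I would only use that strong chordality is, by definition, a hereditary property, so it suffices to check that a $k$-sun $T$ with $k\ge 3$ --- with rim $v_1,v_3,\dots,v_{2k-1}$ and hub-clique $v_2,v_4,\dots,v_{2k}$ --- has no simple vertex and hence cannot be an induced subgraph of a strongly chordal graph. This is a short verification: a rim vertex $v_{2i-1}$ is simplicial with $N[v_{2i-1}]=\{v_{2i-2},v_{2i-1},v_{2i}\}$, yet $N[v_{2i-2}]$ and $N[v_{2i}]$ are incomparable because (this uses $k\ge 3$) each of the two hubs has a rim neighbor the other one misses; and a hub $v_{2i}$ is not even simplicial, since its two cycle-neighbors $v_{2i-1},v_{2i+1}$ are non-adjacent rim vertices. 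So no vertex of $T$ is simple.

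\emph{Reverse direction (a chordal, trampoline-free graph is strongly chordal).} Since ``chordal and trampoline-free'' is hereditary, it is enough to establish the key statement that \emph{every chordal, trampoline-free graph $G$ has a simple vertex}; applying this inside every induced subgraph (equivalently, inducting on $|V(G)|$ and deleting the simple vertex just found) then yields strong chordality. To produce a simple vertex I would argue by contradiction: assume $G$ is chordal and trampoline-free but has no simple vertex, and construct an induced $k$-sun. Start from a simplicial vertex $v$ (which exists because $G$ is chordal); for every $u\in N(v)$ one has $N[v]\subseteq N[u]$ since $N[v]$ is a clique, so the failure of $v$ to be simple forces two vertices $x,y\in N(v)$ with $N[x],N[y]$ incomparable, say $a\in N[x]\setminus N[y]$ and $b\in N[y]\setminus N[x]$. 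Because $N[v]$ is a clique one checks $a,b\notin N[v]$ and that $a,x,y,b$ are four distinct vertices; then $ax,xy,yb\in E$, $ay,bx\notin E$, and since $G$ has no induced $C_4$ also $ab\notin E$, so $\{a,x,y,b\}$ induces the path $a\text{--}x\text{--}y\text{--}b$. Together with $v$ (adjacent to $x,y$, not to $a,b$) this is the ``seed'' of a sun: a triangle $vxy$ carrying a private pendant $a$ at $x$ and $b$ at $y$.

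From that seed the plan is to run an extremal/inductive procedure that retains the seed and appends triangles along a rim, each step invoking the no-simple-vertex hypothesis (on $x$, or on a simplicial vertex of a suitably chosen smaller induced subgraph) to expose a fresh incomparable pair of closed neighborhoods, and each step using chordality --- no induced $C_4$, and more generally no induced hole --- to force the newly created adjacencies to be exactly the chords of a sun and nothing else, until the rim closes into an even cycle and an induced $k$-sun with $k\ge 3$ stands revealed, contradicting the hypothesis. I expect the real obstacle to be exactly this last part: ensuring termination (that the rim genuinely closes up) by a minimal-counterexample argument, while simultaneously controlling \emph{all} pairwise adjacencies among the accumulated vertices so that no rim or hub edge is missing and no extra chord is present --- this is the step that demands honest case analysis rather than bookkeeping, and it is where chordality gets used over and over. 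As a more robust fallback I would keep the route through \emph{strong elimination orderings}: prove that a chordal, trampoline-free graph admits a strong elimination ordering (by induction on its clique tree, or equivalently via the totally-balanced-matrix characterization of the closed-neighborhood hypergraph), and then note the trivial fact that the first vertex of such an ordering is simple in $G$ and stays simple in every induced subgraph, which gives strong chordality at once.
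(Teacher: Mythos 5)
Your forward direction is fine: strong chordality is hereditary, and your verification that a $k$-sun ($k\ge 3$) has no simple vertex (rim vertices are simplicial but their two hub neighbours have incomparable closed neighbourhoods; hub vertices are not even simplicial) is complete and correct. Note, for calibration, that the paper itself offers no proof of this statement at all — it is quoted from Farber \cite{8} as a known characterization — so you are being compared against the literature, not against an argument in the text.

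The problem is the reverse direction, which is where essentially all of Farber's theorem lives, and there you have a genuine gap rather than a proof. Your "seed'' construction (a simplicial vertex $v$ that is not simple yields an induced path $a$--$x$--$y$--$b$ with $v$ adjacent to $x,y$ only) is sound, but everything after it is declared rather than done: the extremal procedure that "appends triangles along a rim'' must simultaneously (i) guarantee that each step produces new vertices with the exact adjacencies of a sun and no extra chords, and (ii) guarantee termination, i.e.\ that the rim actually closes into an even cycle of length at least $6$ instead of growing forever or collapsing back onto earlier vertices. You yourself flag this as "the real obstacle,'' and it is: controlling all pairwise adjacencies among the accumulated vertices is the delicate case analysis that constitutes Farber's argument (or, in other treatments, the induction establishing that chordal plus sun-free graphs are totally balanced), and no part of it is carried out. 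The fallback route is no better off, because "prove that a chordal, trampoline-free graph admits a strong elimination ordering (by induction on its clique tree, or via the totally-balanced-matrix characterization)'' simply restates the hard implication in equivalent language — the equivalence of strong chordality with total balancedness of the closed-neighbourhood matrix, and the fact that sun-freeness forces it, are theorems of the same depth as the one being proved. As it stands, the proposal proves only the easy half; the hard half would need either the full Farber-style case analysis or an honest development of the totally balanced / $\Gamma$-free matrix machinery.
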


Monma, Reed and Trotter \cite{24} shown that every co-TT graph is strongly chordal i.e. also chordal graph. Thus a co-TT graph does not contain $C_n (n\geq 4)$ as an induced subgraph.  In this paper our main goal is to establish a forbidden induced subgraph characterization of co-TT graphs.

\section{Preliminaries}
\par Lekkerkerker and Boland \cite{22} provided an insightful characterization of interval graphs based on the concept of asteroidal triples (AT) of vertices. An independent set of three vertices in a graph forms an \textit{asteroidal triple} if it is possible to connect any pair of these vertices with a path that avoids the neighbors of the third vertex.\\
\begin{theo}[\cite{22}]
    A chordal graph $G$ is an interval graph if and only if it does not contain any asteroidal triple of vertices.
\end{theo}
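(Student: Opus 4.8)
\ The forward implication is the easy half and does not even need chordality: if $G$ is an interval graph, fix a representation $v\mapsto I_v$ and let $\{x,y,z\}$ be any independent set, so that $I_x,I_y,I_z$ are pairwise disjoint and hence linearly ordered along the line; relabel so that $I_y$ lies strictly between $I_x$ and $I_z$. For any path $P=xv_1v_2\cdots z$ the union $\bigcup_i I_{v_i}$ is a single interval, since consecutive intervals on $P$ overlap, and it meets both $I_x$ and $I_z$, hence contains the left endpoint of $I_y$; therefore some vertex of $P$ has its interval meeting $I_y$, i.e.\ lies in $N[y]$. Thus no path joins $x$ and $z$ off $N[y]$, so $\{x,y,z\}$ is not an asteroidal triple and $G$ has none.

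For the converse I would argue by contraposition through the clique tree description of chordal graphs (we may assume $G$ connected, treating components separately). Recall that a connected chordal graph has a clique tree $T$, a tree on the maximal cliques of $G$ in which, for every vertex $v$, the maximal cliques containing $v$ induce a subtree $T_v$; and that $G$ is an interval graph exactly when it has a clique tree which is a path (equivalently, a linear order of the maximal cliques in which, for each vertex, the cliques containing it are consecutive). So suppose $G$ is chordal but not an interval graph; then no clique tree of $G$ is a path, so every clique tree has a node of degree at least $3$. Fix such a $T$ and a node $C$; after relabelling, let $C_1,C_2,C_3$ be three neighbours of $C$, set $S_t=C\cap C_t$, and let $T_t$ be the component of $T-C$ containing $C_t$, so that $S_t$ is exactly the set of vertices lying both in $C$ and in some clique of $T_t$. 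Since $C_t$ is a maximal clique different from $C$, pick $a_t\in C_t\setminus C$; because $a_t\notin C$, the subtree $T_{a_t}$ lies inside $T_t$, so $N[a_t]\subseteq\bigcup\{K:K\in T_t\}$. It follows that $a_1,a_2,a_3$ are pairwise distinct and pairwise non-adjacent (a common vertex or common clique of $a_i,a_j$ would lie in $T_i\cap T_j=\emptyset$), and that for $\{i,j,k\}=\{1,2,3\}$ the set $N[a_k]$ meets $\bigcup\{K:K\in T_i\cup\{C\}\cup T_j\}$ only inside $S_k$, since any vertex of that intersection lies in a clique of $T_k$ and in a clique on the other side of $C$, so its clique subtree passes through $C$ and the vertex lies in $C\cap C_k$. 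One then wants a path from $a_i$ to $a_j$ inside $\bigcup\{K:K\in T_i\cup\{C\}\cup T_j\}$ that avoids $S_k$: since $C\setminus C_k$ is a non-empty clique, it suffices to reach some vertex of $C\setminus C_k$ from $a_i$ within $\bigcup\{K:K\in T_i\cup\{C\}\}$ while avoiding $S_k$, and symmetrically from $a_j$, and then to splice the two through $C\setminus C_k$. This produces a path off $N[a_k]$ and exhibits $\{a_1,a_2,a_3\}$ as an asteroidal triple, the desired contradiction.

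The delicate point, and the step I expect to be the real obstacle, is that last routing claim: a path from $a_i$ must cross $S_i$ to enter $C$, and if $S_i\subseteq S_k$ then every vertex of $S_i$ lies in $N[a_k]$ and $a_i$ is genuinely cut off. So one must choose the clique tree and the three branches so that no such containment occurs among the separators at $C$. My plan is an exchange argument: if $S_i\subseteq S_j$ for two neighbours $C_i,C_j$ of $C$, then detaching the whole branch $T_i$ from $C$ and re-attaching it at $C_j$ again yields a clique tree --- the only vertices affected are those of $S_i$, and each of them lies in $S_i\subseteq S_j\subseteq C_j$, so its clique subtree stays connected --- while $\deg_T(C)$ drops by one. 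Iterating this and taking $T$ minimal for a suitable potential (the number of leaves, refined if needed by the number of nodes of degree $\ge 3$), one reaches a clique tree in which every node of degree $\ge 3$ has pairwise incomparable incident separators; choosing the three branches there forces each $S_i\not\subseteq S_k$ and the routing goes through. Making this minimality and exchange bookkeeping airtight --- so that large degree cannot cascade from node to node indefinitely --- is where the care is needed; failing that, one can fall back on Lekkerkerker and Boland's original induction, peeling off a simplicial vertex $v$ (one exists by \cite{7}), taking an interval representation of $G-v$ from the inductive hypothesis, and using the absence of asteroidal triples to insert an interval for $v$, at the price of a substantial case analysis on the position of $N(v)$ in that representation.
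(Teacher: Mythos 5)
This statement is quoted in the paper from Lekkerkerker and Boland \cite{22} and is not proved there, so there is no internal proof to compare against; your attempt has to stand on its own as a proof of the classical theorem. Your forward direction is fine (and, as you note, does not use chordality). The converse, however, has a genuine gap, and it sits exactly where you say the delicacy is. Your plan needs the structural claim that a chordal non-interval graph admits a clique tree with a node $C$ of degree at least $3$ whose three incident separators $S_1,S_2,S_3$ are pairwise incomparable, so that vertices $a_t\in C_t\setminus C$ taken from the neighbouring cliques form an asteroidal triple. That claim is false, so no amount of exchange/minimality bookkeeping can reach it. The paper's graph $T$ (Fig.~13) is already a counterexample: it is the claw with each edge subdivided once, its maximal cliques are its six edges, and since the cliques containing $x_3$ are exactly $x_2x_3$, $x_3x_4$, $x_3x_6$, they must induce a path in every clique tree, with $x_1x_2$, $x_4x_5$, $x_6x_7$ hanging off them. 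Hence in every clique tree the unique node of degree $3$ is the middle one of the three $x_3$-cliques, and two of its three incident separators are both equal to $\{x_3\}$, i.e.\ comparable. Your re-attachment move only shuttles branches among the $x_3$-cliques and reproduces the same situation, which is why the iteration "cascades" without terminating in the desired configuration: the configuration simply does not exist for this graph.

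Moreover, the triple your construction would output at such a node, for instance $\{x_2,x_5,x_6\}$ when $C=x_3x_4$, is not an asteroidal triple (every $x_2$--$x_5$ path passes through $x_3\in N[x_6]$), whereas the actual asteroidal triple of $T$ is the leaf set $\{x_1,x_5,x_7\}$, whose vertices lie deeper in the branches than the cliques adjacent to $C$. This is the real content of the Lekkerkerker--Boland theorem: when the separators at a branching node are comparable or equal, one must descend further into the branches (or run their induction on a simplicial vertex with its case analysis) to manufacture the triple, and your proposal does not carry out either route --- the fallback you mention is a pointer to the literature, not an argument. So as it stands you have proved only the easy implication.
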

Lekkerkerker and Boland also compiled a list of forbidden induced subgraphs for interval graphs. In fact, they established the following theorem.\\
 \begin{theo}[\cite{22}]
     A chordal graph is an interval graph if and only if it does not contain any graph (namely $T, W, T_n$ and $H_n$) of Fig.4 as an induced subgraph.
 \end{theo}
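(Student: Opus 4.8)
The plan is to derive this refined obstruction list from the asteroidal-triple characterization (Theorem~5), so that the whole task reduces to describing exactly which chordal graphs carry an asteroidal triple. Two standing facts will be used repeatedly: interval graphs form a hereditary class (an induced subgraph of an interval graph is again an interval graph), and a chordal graph has no induced $C_k$ with $k\ge 4$ — which is precisely why the chordless cycles, present in the general obstruction list for interval graphs, do not appear in this chordal-restricted statement, leaving only $T$, $W$, $T_n$ and $H_n$.

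First I would dispose of the easy (``only if'') direction. If $G$ is a chordal interval graph then, by heredity, it suffices to verify that none of the graphs $T$, $W$, $T_n$, $H_n$ of Fig.~4 is itself an interval graph. Each of these is chordal, and for each I would exhibit an explicit asteroidal triple — the three ``tips'' of the spider/sun-like configuration — so that Theorem~5 immediately certifies it is not an interval graph. This is a direct verification, uniform across the two infinite families $T_n$ and $H_n$ and a one-off check for $T$ and $W$.

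For the ``if'' direction, suppose $G$ is chordal but not an interval graph; by Theorem~5, $G$ contains an asteroidal triple $\{a,b,c\}$. I would then pass to a \emph{minimal} witness: among all asteroidal triples of $G$, choose one together with shortest connecting paths $P_{ab}$ (avoiding $N[c]$), $P_{bc}$ (avoiding $N[a]$), $P_{ca}$ (avoiding $N[b]$) so that the total number of vertices on $P_{ab}\cup P_{bc}\cup P_{ca}$ is minimum. Let $H$ be the induced subgraph of $G$ on those vertices, possibly enlarged by a bounded number of extra vertices that chordality forces in (for instance to triangulate a would-be long cycle formed by two of the paths together with the edge or short connection between their ends). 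The core of the argument is a structural analysis of $H$: since $G$ is chordal, every long cycle among the three paths is triangulated, while minimality of the chosen triple and paths forbids chords that would shorten a path or replace the triple by a smaller one. Organising the discussion according to the lengths of $P_{ab}$, $P_{bc}$, $P_{ca}$ and according to which pairs of paths share vertices or have chords between them, one shows that $H$, and hence $G$, must contain an induced copy of one of $T$, $W$, $T_n$, $H_n$ — the configurations with all three paths short yield $T$ or $W$, and an unbounded path length forces one of the families $T_n$ or $H_n$.

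The main obstacle is exactly this final case analysis. The delicate point is that an asteroidal triple living inside the induced subgraph $H$ does not, on its own, transfer into an induced \emph{named} obstruction: one must keep tight control of \emph{all} chords among the three connecting paths, since chordality forces triangulations but does not a priori prevent sporadic triangulation patterns outside the list of Fig.~4. Showing that, under the minimality hypotheses, the only surviving patterns are precisely $T$, $W$ and the families $T_n$, $H_n$ is the enumerative heart of the proof, and it is where the length of Lekkerkerker and Boland's original argument is concentrated; the bookkeeping on shortest paths and on the choice of the triple is what ultimately rules out every other configuration.
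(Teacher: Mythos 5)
This statement is quoted in the paper from Lekkerkerker and Boland \cite{22}; the paper itself gives no proof, so the only question is whether your argument stands on its own. It does not yet. Your ``only if'' direction is fine: $T$, $W$, $T_n$, $H_n$ are chordal, each visibly contains an asteroidal triple (the three pendant/tip vertices), so by Theorem~5 none is an interval graph, and heredity finishes it. The gap is in the ``if'' direction. After choosing a minimal asteroidal triple with shortest connecting paths and forming the subgraph $H$, you write that ``organising the discussion according to the lengths of the paths \dots one shows that $H$, and hence $G$, must contain an induced copy of one of $T$, $W$, $T_n$, $H_n$.'' That sentence \emph{is} the theorem; nothing in your text actually carries out the case analysis, and you yourself flag it as ``the enumerative heart'' and ``the main obstacle.'' A proof must control all chords between the three paths, rule out the triple being realized through vertices adjacent to two of the paths, and show that every surviving configuration collapses onto one of the four named families --- this is precisely the long technical part of Lekkerkerker and Boland's paper, and it cannot be waved through by an appeal to minimality alone.

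Two further points need tightening even at the level of the outline. First, the phrase ``possibly enlarged by a bounded number of extra vertices that chordality forces in'' is not a well-defined construction; you must say exactly which vertices are added and prove that the enlarged graph is still induced in $G$ and still carries the structure you need. Second, minimality of the total path length does not by itself forbid chords between different paths (it only forbids shortcuts within a single path and adjacency to the avoided third vertex), so the assertion that minimality ``forbids chords that would \dots replace the triple by a smaller one'' needs a precise statement and proof of which chords are excluded and why. Until the exhaustive configuration analysis is written out, the argument is a correct strategy (indeed essentially the classical one) but not a proof.
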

 
 \begin{figure}[H]
    \centering
    \begin{tikzpicture}[line cap=round,line join=round,x=1.0cm,y=1.0cm,scale=1.2]
\clip(0.6,.5) rectangle (12.,8.);
\draw [line width=.2pt] (1.,5.)-- (2.,5.);
\draw [line width=.2pt] (2.,5.)-- (3.,5.);
\draw [line width=.2pt] (3.,5.)-- (4.,5.);
\draw [line width=.2pt] (4.,5.)-- (5.,5.);
\draw [line width=.2pt] (3.,5.)-- (3.,6.);
\draw [line width=.2pt] (3.,6.)-- (3.,7.);
\draw [line width=.2pt] (7.,6.)-- (8.,6.);
\draw [line width=.2pt] (8.,6.)-- (9.,6.);
\draw [line width=.2pt] (9.,6.)-- (10.,6.);
\draw [line width=.2pt] (10.,6.)-- (11.,6.);
\draw [line width=.2pt] (7.,6.)-- (9.,5.);
\draw [line width=.2pt] (11.,6.)-- (9.,5.);
\draw [line width=.2pt] (10.,6.)-- (9.,5.);
\draw [line width=.2pt] (9.,6.)-- (9.,5.);
\draw [line width=.2pt] (8.,6.)-- (9.,5.);
\draw [line width=.2pt] (9.,7.)-- (9.,6.);
\draw [line width=.2pt] (3.,4.)-- (3.,3.);
\draw [line width=.2pt] (3.,3.)-- (2.,2.);
\draw [line width=.2pt] (3.,3.)-- (3.,2.);
\draw [line width=.2pt] (2.,2.)-- (1.,2.);
\draw [line width=.2pt] (2.482985430927987,1.9888405277485484)-- (3.,3.);
\draw [line width=.2pt] (2.,2.)-- (2.482985430927987,1.9888405277485484);
\draw [line width=.2pt] (2.482985430927987,1.9888405277485484)-- (3.,2.);
\draw [line width=.2pt] (3.,3.)-- (4.,2.);
\draw [line width=.2pt] (4.,2.)-- (5.,2.);
\draw [line width=.2pt] (9.,4.)-- (8.378698114060484,3.0157197149441486);
\draw [line width=.2pt] (9.,4.)-- (9.610088490987664,2.9864008964458826);
\draw [line width=.2pt] (8.378698114060484,3.0157197149441486)-- (9.610088490987664,2.9864008964458826);
\draw [line width=.2pt] (8.378698114060484,3.0157197149441486)-- (8.,2.);
\draw [line width=.2pt] (9.610088490987664,2.9864008964458826)-- (10.,2.);
\draw [line width=.2pt] (7.469814740614232,2.7958285762071515)-- (8.378698114060484,3.0157197149441486);
\draw [line width=.2pt] (7.469814740614232,2.7958285762071515)-- (8.,2.);
\draw [line width=.2pt] (9.610088490987664,2.9864008964458826)-- (10.418500562136417,2.793410786565484);
\draw [line width=.2pt] (10.418500562136417,2.793410786565484)-- (10.,2.);
\draw [line width=.2pt] (8.4,2.)-- (8.810375225420717,1.9893481182076334);
\draw [line width=.2pt] (8.,2.)-- (8.4,2.);
\draw [line width=.2pt] (8.,2.)-- (9.610088490987664,2.9864008964458826);
\draw [line width=.2pt] (8.378698114060484,3.0157197149441486)-- (8.4,2.);
\draw [line width=.2pt] (8.378698114060484,3.0157197149441486)-- (8.810375225420717,1.9893481182076334);
\draw [line width=.2pt] (8.810375225420717,1.9893481182076334)-- (9.610088490987664,2.9864008964458826);
\draw [line width=.2pt] (8.4,2.)-- (9.610088490987664,2.9864008964458826);
\draw [line width=.2pt] (8.378698114060484,3.0157197149441486)-- (10.,2.);
\draw (2.8,4.8) node[anchor=north west,scale=1.5] {$T$};
\draw (8.8,4.8) node[anchor=north west,scale=1.5] {$W$};
\draw (2.8,1.8) node[anchor=north west,scale=1.5] {$T_n $};
\draw (8.8,1.8) node[anchor=north west,scale=1.5] {$H_n$};
\begin{scriptsize}
\draw [fill=black] (1.,5.) circle (1.5pt);
\draw [fill=black] (2.,5.) circle (1.5pt);
\draw [fill=black] (3.,5.) circle (1.5pt);
\draw [fill=black] (4.,5.) circle (1.5pt);
\draw [fill=black] (5.,5.) circle (1.5pt);
\draw [fill=black] (3.,6.) circle (1.5pt);
\draw [fill=black] (3.,7.) circle (1.5pt);
\draw [fill=black] (7.,6.) circle (1.5pt);
\draw [fill=black] (8.,6.) circle (1.5pt);
\draw [fill=black] (9.,6.) circle (1.5pt);
\draw [fill=black] (10.,6.) circle (1.5pt);
\draw [fill=black] (11.,6.) circle (1.5pt);
\draw [fill=black] (9.,5.) circle (1.5pt);
\draw [fill=black] (9.,7.) circle (1.5pt);
\draw [fill=black] (3.,4.) circle (1.5pt);
\draw [fill=black] (3.,3.) circle (1.5pt);
\draw [fill=black] (2.,2.) circle (1.5pt);
\draw [fill=black] (3.,2.) circle (1.5pt);
\draw [fill=black] (1.,2.) circle (1.5pt);
\draw [fill=black] (2.482985430927987,1.9888405277485484) circle (1.5pt);
\draw [fill=black] (4.,2.) circle (1.5pt);
\draw [fill=black] (5.,2.) circle (1.5pt);
\draw [fill=black] (3.196910272007053,2) circle (.5pt);
\draw [fill=black] (3.3974765404761516,2) circle (.5pt);
\draw [fill=black] (3.597655378268306,2) circle (.5pt);
\draw [fill=black] (9.,4.) circle (1.5pt);
\draw [fill=black] (8.378698114060484,3.0157197149441486) circle (1.5pt);
\draw [fill=black] (9.610088490987664,2.9864008964458826) circle (1.5pt);
\draw [fill=black] (8.,2.) circle (1.5pt);
\draw [fill=black] (10.,2.) circle (1.5pt);
\draw [fill=black] (7.469814740614232,2.7958285762071515) circle (1.5pt);
\draw [fill=black] (10.418500562136417,2.793410786565484) circle (1.5pt);
\draw [fill=black] (8.4,2.) circle (1.5pt);
\draw [fill=black] (8.810375225420717,1.9893481182076334) circle (1.5pt);
\draw [fill=black] (9.2,2.) circle (.5pt);
\draw [fill=black] (9.4,2.) circle (.5pt);
\draw [fill=black] (9.6,2.) circle (.5pt);
\end{scriptsize}
\end{tikzpicture}
    \caption*{Fig. 4. These chordal graphs are not interval graphs because each of them contains an asteroidal triple (AT) of vertices. Additionally, each of the two lower graphs contains at least six vertices.}\label{fig4}
    \label{fig:enter-label}
\end{figure}

A bipartite graph $B=(X,Y,E)$ is a \textit{Ferrers bigraph} if the set of neighbors of any partite sets are linearly ordered by inclusion. An equivalent definition is, $B$ is a Ferrers bigraph if and only if $B$ does not contain $2K_2$ as an induced subgraph.\\
Let $A$ be a matrix. A stair partition of $A$ is a partition of its entries into two sets $(L,U)$ by a polygonal path from the upper left corner to the lower right corner of the matrix such that the set $L$ (respectively, $U$) is closed under leftward or downward (respectively, rightward or upward) movements. A $(0,1)$-matrix $A$ is a \textit{Ferrers matrix} if there exists a stair partition $(L,U)$ of $A$ such that all the entries in $L$ are 1 and those in $U$ are 0.\\

\begin{figure}[H]
    \centering
\begin{tikzpicture}[line cap=round,line join=round,,x=1.0cm,y=1.0cm, scale=1.5]
\clip(0.5,0) rectangle (8,2.5);
\draw (1,2)-- (2.5,2);
\draw (1,2)-- (1,0.5);
\draw (1,0.5)-- (2.5,0.5);
\draw (2.5,2)-- (2.5,0.5);
\draw (1,1.5)-- (1.5,1.5);
\draw (1.5,1.5)-- (1.5,1);
\draw (1.5,1)-- (2,1);
\draw (2,1)-- (2,0.5);
\draw (3.5,2)-- (5,2);
\draw (5,2)-- (5,0.5);
\draw (5,0.5)-- (3.5,0.5);
\draw (3.5,0.5)-- (3.5,2);
\draw (3.5,1.5)-- (4,1.5);
\draw (4,1.5)-- (4,1);
\draw (4,1)-- (4.5,1);
\draw (4.5,1)-- (4.5,0.5);
\draw (6,2)-- (7.5,2);
\draw (7.5,2)-- (7.5,0.5);
\draw (7.5,0.5)-- (6,0.5);
\draw (6,0.5)-- (6,2);
\draw (6,1.5)-- (6.5,1.5);
\draw (6.5,1.5)-- (6.5,1);
\draw (6.5,1)-- (7,1);
\draw (7,1)-- (7,0.5);
\draw (1.,1.25) node[anchor=north west,scale=1.5] {$L$};
\draw (1.75,1.75) node[anchor=north west,scale=1.5] {$U$};
\draw (3.5,1.25) node[anchor=north west,scale=1.5] {$1$};
\draw (4.25,1.75) node[anchor=north west,scale=1.5] {$0$};
\draw (6.,1.25) node[anchor=north west,scale=1.5] {$0$};
\draw (6.75,1.75) node[anchor=north west,scale=1.5] {$1$};
\draw (1.5,.5) node[anchor=north west,scale=1.] {$(i)$};
\draw (4,.5) node[anchor=north west,scale=1.] {$(ii)$};
\draw (6.5,.5) node[anchor=north west,scale=1.] {$(iii)$};

\end{tikzpicture}
    \caption*{Fig. 5. $(i)$ A stair partition $(L,U)$, $(ii)$ A Ferrers matrix $F$, $(iii)$ The complement of the Ferrers matrix $F$.}
    \label{fig:enter-label}
\end{figure}

Now $B$ is a Ferrers bigraph if and only if its biadjacency matrix $F$ is a Ferrers matrix (after suitable permutations of the rows and columns). Also it is easy to observe that the complement of $F$ is also a Ferrers matrix and $F$ does not contain  $\left( \begin{array}{ c  c } $1$ & $0$ \\ $0$ & $1$ \end{array} \right) or \left( \begin{array}{ c  c } $0$ & $1$ \\ $1$ & $0$ \end{array} \right)$ as a submatrix.

Several characterizations of interval bigraphs are known (see~\cite{17,27,30}). One characterization uses \textit{Ferrers bigraphs} ( introduced independently by Guttman \cite{15} and Riguet \cite{25} ), defined above.

Any binary matrix, specifically a (0, 1)-matrix, with only one zero is  a Ferrers matrix. Consequently, every bigraph $B$ can be represented as the intersection of a finite number of Ferrers bigraphs. The minimum number of Ferrers bigraphs required to intersect and reconstruct $B$ is referred to as the Ferrers dimension of the bigraph $B$, denoted as \textit{fdim}($B$).\par

Sen et al. \cite{27} proved that every interval bigraph has Ferrers dimension at most $2$, but the converse is not true. Bigraphs with Ferrers dimension $2$ were characterized independently by Cogis \cite{2} and others. Cogis \cite{2} introduced the \textit{associated graph} $H(B)$ for a bigraph $B$. Its vertices are the $0$'s of its biadjacency matrix of $B$, with two such vertices are adjacent in $H(B)$ if and only if they are the $0$'s of the $2$-by-$2$ permutation matrix. Cogis \cite{2} proved that $fdim(B) \leq 2$ if and only if $H(B)$ is bipartite. Sen et al. \cite{27} translate Cogis's condition to an adjacency matrix condition for a bigraph $B$ to be of $fdim(B) \leq 2$ in the following theorem.\par

\begin{theo}[\cite{27}]
A bigraph B has Ferrers dimension at most $2$ if and only if the rows and the columns of the biadjacency matrix $A(B)$ of $B$ can be permuted independently, so that no 0 has a 1 both below it and to its right.
\end{theo}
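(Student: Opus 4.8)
The plan is to prove the two implications separately and directly from the definition of Ferrers dimension together with condition (i) of Ferrers bigraphs (the neighbourhoods of a partite set being linearly ordered by inclusion); Cogis's bipartiteness criterion is not strictly needed here, although it gives an alternative route for the ``only if'' direction. Write $A(B)$ for the biadjacency matrix, with $X$ indexing rows and $Y$ indexing columns, and recall that if $B = F_1 \cap F_2$ then $A(B)$ is the entrywise minimum of $A(F_1)$ and $A(F_2)$; in particular a $0$ of $F_i$ at a position forces a $0$ of $A(B)$ at that position. This last remark is the bridge between the two halves of the argument.

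First suppose $B$ has the stated matrix property after some independent row/column permutation: no $0$ has a $1$ both below it and to its right. Introduce two $(0,1)$-matrices on the same index set: let $U$ have a $1$ in position $(i,j)$ precisely when column $j$ of $A(B)$ contains a $1$ in some row $\ge i$, and let $V$ have a $1$ in position $(i,j)$ precisely when row $i$ of $A(B)$ contains a $1$ in some column $\ge j$. The first step is to verify that $A(B)$ is the entrywise minimum of $U$ and $V$: one has $A(B) \le U$ and $A(B) \le V$ by construction, and conversely, if $A(B)$ has a $0$ at $(i,j)$ then by the hypothesis it has no $1$ strictly below it (so $U$ has a $0$ there) or no $1$ strictly to its right (so $V$ has a $0$ there). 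The second step is to note that $U$ is the biadjacency matrix of a Ferrers bigraph: by definition each column of $U$ has all its $1$'s in an initial block of rows, so the row-neighbourhoods $\{j : U(i,j)=1\}$ are initial segments and hence form a chain under inclusion, which is condition (i); symmetrically $V$ is a Ferrers biadjacency matrix. Therefore $B$ is the intersection of two Ferrers bigraphs, i.e.\ $fdim(B) \leq 2$.

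For the converse, suppose $fdim(B) \leq 2$, say $B = F_1 \cap F_2$ with $F_1,F_2$ Ferrers. By condition (i) the $X$-side neighbourhoods of $F_1$ form a chain; order the rows so that these neighbourhoods are non-increasing. Independently, the $Y$-side neighbourhoods of $F_2$ form a chain; order the columns so that these are non-increasing. The point is that with this row order every column of $A(F_1)$ is $1$ on an initial block of rows and $0$ afterwards (each $Y$-neighbourhood of $F_1$ is an initial segment of the row order), and, symmetrically, with this column order every row of $A(F_2)$ is $1$ on an initial block of columns and $0$ afterwards. Now consider any $0$ of $A(B)$ in the resulting doubly-permuted matrix. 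It is a $0$ of $A(F_1)$ or a $0$ of $A(F_2)$. In the former case every entry below it in its column is a $0$ of $A(F_1)$, hence a $0$ of $A(B)$, so there is no $1$ below it; in the latter case, symmetrically, there is no $1$ to its right. Either way there is no $1$ both below and to the right, which is exactly the asserted property.

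The one genuinely delicate point — and the step I expect a careless reader to botch — is the choice of orderings in the converse: the row order must be taken from $F_1$ and the column order from $F_2$ (not both from the same bigraph), so that the two ``staircase'' requirements act on disjoint data and therefore cannot interfere with one another. Everything else is routine: the verification that $U$ and $V$ behave as claimed, and the repeated use of the entrywise inequality $A(B) \le A(F_i)$ to transfer a $0$ of a Ferrers factor down to a $0$ of $A(B)$.
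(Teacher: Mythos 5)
Your proof is correct, and it is worth noting that the paper itself contains no proof of this statement: Theorem~2 is quoted from Sen et al.~\cite{27}, who obtained the matrix condition by translating Cogis's criterion that $fdim(B)\leq 2$ if and only if the associated graph $H(B)$ is bipartite. Your argument bypasses Cogis entirely and works straight from the definition of Ferrers dimension. In the direction ``matrix condition $\Rightarrow fdim(B)\leq 2$'' you exhibit the two Ferrers factors explicitly as the column-suffix and row-suffix closures $U$ and $V$ of the permuted matrix; the verification that $A(B)$ is the entrywise minimum of $U$ and $V$ hinges on negating ``a $1$ both below and to the right'' into ``no $1$ below, or no $1$ to the right,'' which you do correctly, and each of $U$, $V$ visibly has nested neighbourhoods, hence is Ferrers. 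In the converse you order the rows by the inclusion chain of neighbourhoods in $F_1$ and the columns by that of $F_2$, observe that each staircase property is unaffected by how the other index set is permuted, and transfer zeros of a factor to zeros of $A(B)$ via $A(B)\leq A(F_i)$; this is exactly the delicate point, and you handle it correctly. Two minor cosmetic remarks: when $fdim(B)=1$ you should say you may take $F_2$ to be the complete bigraph (or $F_2=F_1$) so that $B=F_1\cap F_2$ still makes sense, and your phrase ``no $1$ strictly below it (so $U$ has a $0$ there)'' tacitly uses that the entry $(i,j)$ itself is $0$, which is available but worth stating. In substance your construction is the zero-partition-style argument underlying Theorem~1 of the paper, specialised to Ferrers dimension $2$, and it is a perfectly valid, self-contained alternative to the cited Cogis-based route.
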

An equivalent definition of bigraphs of Ferrers dimension 2 has given by Hell et al. \cite{19}, which they called \textit{min-orderable bigraphs}. A bigraph is of \textit{min-orderable} if its biadjacency matrix has independent permutation of rows and columns such that the biadjacency matrix does not have any of the  matrices  $\left( \begin{array}{ c  c } $1$ & $0$ \\ $0$ & $1$ \end{array} \right) or \left( \begin{array}{ c  c } $0$ & $1$ \\ $1$ & $1$ \end{array} \right)$ as a submatrix.

A bigraph $B = (X, Y, E)$ is classified as an interval containment bigraph if, for each vertex $v \in X \cup Y$ of $B$, there exists an interval $I_v$ such that the vertices $x \in X$ and $y \in Y$ are adjacent if and only if the interval $I_x$ contains $I_y$. It has been noted in \cite{6, 21} that interval containment bigraphs are equivalent to bigraphs with Ferrers dimension 2.\\
\textit{Circular arc graphs} are the intersection graphs of a collection of circular arcs of a host circle. If the vertices of the circular arc graph $G$ can be partitioned into two disjoint cliques, denoted as $K$ and $K'$, it is referred to as a \textit{two-clique circular arc graph}. Let the vertex sets of $K$ and $K'$ be respectively denoted as $X$ and $Y$. The complement of $G$, denoted as $\Bar{G}$, forms a bipartite graph $B = (X, Y, E)$, where $E$ represents the set of non-edges in $G$.\\
\par Now, we present a result from J. Huang \cite{21} that establishes a connection between two-clique circular arc graphs and bigraphs with Ferrers dimension 2.
\begin{theo}[\cite{21}]\label{t7}
    A graph $G$ is a two-clique circular arc graph if and only if its complement $\Bar{G}$ is a bigraph of Ferrers dimension 2.
\end{theo}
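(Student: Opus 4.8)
The plan is to establish the (apparently stronger) equivalence: $G$ is a two-clique circular-arc graph with defining cliques $X,Y$ if and only if the bigraph $B=(X,Y,E)$ whose edges are the non-edges of $G$ is an \emph{interval containment bigraph}. The theorem then follows from the equivalence, recorded above, between interval containment bigraphs and bigraphs of Ferrers dimension at most $2$. Both implications would be carried out by opening the host circle at a well-chosen point and keeping track of which arcs wrap around it.

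For the direction ``Ferrers dimension $\le 2$ implies a two-clique arc model'', I would write $B=F_1\cap F_2$ with $F_1,F_2$ Ferrers bigraphs on the same parts $X,Y$ and, using the inclusion-ordering of their $X$-neighbourhoods, choose generic reals $f_k(x)$ ($x\in X$), $g_k(y)$ ($y\in Y$) with $xy\in F_1\Leftrightarrow f_1(x)>g_1(y)$ and $xy\in F_2\Leftrightarrow f_2(x)<g_2(y)$ (all values distinct within $X$, within $Y$, and $f_k(X)$ disjoint from $g_k(Y)$). Shifting the $F_1$-scale upward so that every $f_1(x),g_1(y)$ exceeds $m:=\max(f_2(X)\cup g_2(Y))$, the sets $I_x:=[f_2(x),f_1(x)]$ and $I_y:=[g_2(y),g_1(y)]$ are genuine intervals, each containing $m$, with $I_y\subseteq I_x$ iff $f_2(x)<g_2(y)$ and $g_1(y)<f_1(x)$, i.e.\ iff $xy\in E(B)$; thus $B$ is an interval containment bigraph. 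To recover $G=\overline B$, put everything on the circle $[0,2L]$ with $0\equiv 2L$ and $L$ larger than every coordinate, set $A_y:=I_y$ for $y\in Y$ (all contain $m$, so $Y$ is a clique) and $A_x:=[f_1(x),2L]\cup[0,f_2(x)]$, the complementary arc of $I_x$, for $x\in X$ (all contain the point $L$, so $X$ is a clique); then $A_x\cap A_y\ne\emptyset$ iff $f_1(x)<g_1(y)$ or $g_2(y)<f_2(x)$, i.e.\ iff $I_y\not\subseteq I_x$, i.e.\ iff $xy\in E(G)$, so $G$ is two-clique circular-arc.

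For the converse, fix a circular-arc model of $G$ in general position with $X,Y$ cliques. The arcs $\{A_x:x\in X\}$ pairwise intersect, so they have a common point or cover the circle, and likewise for $Y$. If $\{A_x:x\in X\}$ has a common point $p$, cut the circle at $p$: each $A_x$ wraps around $p$ and becomes the complement of a genuine interval $W_x\subseteq(0,L)$; each $y\in Y$ with $p\in A_y$ is $B$-nonadjacent to all of $X$; each $y$ with $p\notin A_y$ is a genuine interval with $xy\in E(B)\Leftrightarrow A_y\subseteq W_x$. Taking $W_x$ as the container of $x$, and $A_y$ (resp.\ the whole line $[0,L]$) as the contained interval of $y$ when $p\notin A_y$ (resp.\ $p\in A_y$) exhibits $B$ as an interval containment bigraph. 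If instead $\{A_y:y\in Y\}$ has a common point, the same construction with $X$ and $Y$ swapped shows the transpose bigraph is an interval containment bigraph; since transposing a Ferrers bigraph yields a Ferrers bigraph, $fdim(\bar G)\le 2$ in either case.

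The remaining case — and the hard part — is when neither clique's arc family has a common point, so each covers the whole circle and no single cut works. Here I would instead verify Cogis's criterion by showing the associated graph $H(\bar G)$ is bipartite. A hypothetical odd cycle $(x_1,y_1),\dots,(x_{2k+1},y_{2k+1})$ in $H(\bar G)$ unwinds, via its $2\times2$ permutation submatrices, into the pattern: for every index $i$ (mod $2k+1$), $A_{x_i}$ meets $A_{y_i}$ but is disjoint from $A_{y_{i-1}}$ and from $A_{y_{i+1}}$, and symmetrically in $X\leftrightarrow Y$; moreover $A_{y_{i-1}}\cup A_{y_{i+1}}$ is a proper arc since $Y$ is a clique. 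When the cycle's arcs $\{A_{x_i}\}$ happen to be concurrent at a point $q$, every $A_{y_j}$ avoids $q$, and cutting at $q$ turns the pattern into genuine intervals $I_i:=A_{y_i}$ and windows $W_i:=\overline{A_{x_i}}$ that share a common point and satisfy $I_{i-1},I_{i+1}\subseteq W_i$ but $I_i\not\subseteq W_i$; comparing left, resp.\ right, endpoints then forces every index to be a strict cyclic local minimum of the left-endpoint sequence or a strict cyclic local maximum of the right-endpoint sequence, whence $C_{2k+1}$ would be the union of two independent sets — impossible. The same works if the cycle's $Y$-arcs are concurrent, so the one genuinely stubborn sub-case, in which both the cycle's $X$-arcs and its $Y$-arcs cover the whole circle, is the main obstacle. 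I expect to dispose of it with a winding-number argument: as $i$ runs once around the odd cycle, the clockwise endpoint of $A_{x_i}$ advances monotonically, while the alternating disjoint/meeting constraints — controlled step by step by the Helly property for triples of arcs in each clique — pin its total advance to a value incompatible with an integral number of revolutions over an odd number of steps. Granting this, $H(\bar G)$ is bipartite, hence $fdim(\bar G)\le 2$, and the argument is complete.
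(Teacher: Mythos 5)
This statement is quoted in the paper from Huang's work \cite{21}; the paper itself gives no proof of it, so your attempt can only be judged on its own terms (and against Huang's published argument), not against an in-paper proof.

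Your forward direction is sound: representing each of the two Ferrers bigraphs by a threshold function, separating the two scales, and then taking $I_y$ together with the complementary arc of $I_x$ on a circle is a correct and essentially standard way to turn $fdim(\bar G)\le 2$ (equivalently, an interval containment representation) into a two-clique circular-arc model. The converse, however, has a genuine gap exactly where the theorem is hard. Pairwise-intersecting circular arcs need not have a common point, and you have not shown that a model can always be chosen (or modified) so that one of the two cliques' arc families is concurrent; so the case in which both families cover the circle cannot be set aside. For that case you fall back on verifying Cogis's criterion, but the decisive sub-case — an alleged odd cycle of $H(\bar G)$ whose $X$-arcs and $Y$-arcs both cover the circle — is only ``expected'' to be excluded by an unspecified winding-number argument; no invariant is defined, no monotonicity is proved, and it is not explained why the total advance of the endpoints is incompatible with an odd number of steps. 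This missing step is precisely the substance of Huang's theorem (his proof proceeds through a careful normalization/analysis of the arc model rather than a one-line parity count), so as written the proof establishes only one implication plus the concurrent special cases of the other, and is therefore incomplete.
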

As mentioned in \cite{19}, several classes of bigraphs, such as interval containment bigraphs, the complements of two-clique circular arc graphs, and two-directional orthogonal ray graphs introduced in \cite{28}, have been found to be equivalent to the class of signed-interval bigraphs. Thus signed-interval bigraphs are equivalent to  the class of bigraphs equivalent to Ferrers dimension 2.

In the following section, we provide a direct proof of the equivalence between the class of signed-interval bigraphs and the class of bigraphs with Ferrers dimension 2. In this context, we also characterize the adjacency matrix of signed-interval graphs.
\section{Biadjacency (adjacency) matrix characterization of signed-interval bigraphs (graphs)}

 The definition of signed-interval graphs has also been extended  to the class of bipartite graphs. A bigraph $B$=$(X, Y, E)$ is a \textit{signed interval bigraph}, if corresponding to each vertex $v \in X \cup Y$ we can assign a positive interval or negative interval such that $xy \in E$ if and only if ${I_x}^+ \cap {I_y}^+ \neq \emptyset$ or ${I_x}^- \subseteq {I_y}^+$ or ${I_y}^- \subseteq {I_x}^+$ .\par
As stated before, in the following theorem, we proved that signed-interval bigraphs are equivalent to the class of bigraphs of Ferrers dimension 2. Our proof is constructive, meaning that when a bigraph has Ferrers dimension 2, we can create its representation as a signed-interval bigraph.
\begin{theo}
For a bigraph $B= (X, Y, E)$ the following conditions are equivalent.
\begin{enumerate}[(a)]
\item $B$ is a signed-interval bigraph.
\item The rows and columns of the biadjacency matrix $A(B)$ of $B$ can be permuted independently so that no $0$ has a $1$ both below and to its right.
\item $B$ is a bigraph of Ferrers dimension at most $2$.
\end{enumerate}
\end{theo}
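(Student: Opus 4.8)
The plan is to prove the two implications separately, preceded by one observation that recasts the defining condition of a signed-interval bigraph arithmetically, exactly as Theorem~3 does for co-TT graphs. I encode the interval assigned to a vertex $v$ by the pair $(a_v,b_v)$ of its endpoints, both positive numbers, with the convention that $a_v\le b_v$ signals a positive interval and $a_v>b_v$ a negative interval, the latter read as the segment $[b_v,a_v]$. I would first verify, by a four-case check on the signs of $x\in X$ and $y\in Y$, that
\[
xy\in E(B) \iff a_x\le b_y \ \text{ and }\ a_y\le b_x .
\]
When both intervals are positive this is the usual intersection test; when exactly one is negative it is the containment $[b_x,a_x]\subseteq[a_y,b_y]$ (or its mirror image); and when both are negative no clause of the definition can apply, which is consistent since $a_x\le b_y$ and $a_y\le b_x$ cannot hold simultaneously in that case. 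As a bigraph carries no edges within $X$ or within $Y$, there is nothing further to check.

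For $(a)\Rightarrow(b)$: given such an assignment, put $F_1=\{(x,y)\in X\times Y: a_x\le b_y\}$ and $F_2=\{(x,y)\in X\times Y: a_y\le b_x\}$. Listing the vertices of $X$ in non-decreasing order of $a_x$ shows the $F_1$-neighbourhoods of the $X$-vertices form a chain under inclusion, so $F_1$ is a Ferrers bigraph; ordering instead by $b_x$ does the same for $F_2$. By the displayed equivalence $E(B)=F_1\cap F_2$, hence $fdim(B)\le 2$.

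For $(b)\Rightarrow(a)$: by definition $E(B)=F_1\cap F_2$ for Ferrers bigraphs $F_1,F_2$ (take $F_2=F_1$ if $fdim(B)\le 1$). A relation is Ferrers precisely when it can be written as $(x,y)\in F\iff f(x)\le g(y)$ for real functions $f$ on $X$ and $g$ on $Y$; fix such representations $\alpha\le\beta$ for $F_1$ and $\gamma\le\delta$ for $F_2$. Pick a constant $C$ making $\alpha_x+C,\ \beta_y+C,\ C-\gamma_x,\ C-\delta_y$ all positive, and set $a_x=\alpha_x+C$, $b_x=C-\gamma_x$ for $x\in X$ and $a_y=C-\delta_y$, $b_y=\beta_y+C$ for $y\in Y$; reading each pair as a positive or negative interval by the convention above yields a signed-interval assignment. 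Then $a_x\le b_y\iff\alpha_x\le\beta_y$ and $a_y\le b_x\iff\gamma_x\le\delta_y$, so by the preliminary equivalence its edge set is exactly $F_1\cap F_2=E(B)$. The one genuine idea here is the sign reversal applied to $\gamma,\delta$: it is what allows the single inequality $a_y\le b_x$ to carry the second Ferrers relation once $a_x\le b_y$ has been spent on the first.

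This construction is effective — given the two Ferrers bigraphs, or equivalently a suitable independent row/column permutation of the biadjacency matrix as in Theorem~2, the numbers $a_v,b_v$ are written down explicitly — so it also supplies the signed-interval representation promised in the statement. I do not anticipate a real obstacle; the only points that demand care are making the four-case preliminary equivalence airtight, in particular that two negative intervals are forced to be non-adjacent, and invoking the standard threshold representation $f(x)\le g(y)$ of a Ferrers bigraph.
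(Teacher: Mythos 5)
Your proof is correct, but it takes a genuinely different route from the paper's. The paper argues through the biadjacency matrix: for (a)$\Rightarrow$(b) it orders rows and columns by increasing left endpoints and checks, case by case, that every $0$ has only $0$s to its right or only $0$s below it, then concludes via Theorem~2; for (b)$\Rightarrow$(a) it reads off explicit intervals $[i,k]$ (row index, column of the last $1$ in that row) from such an arrangement and verifies the three sign patterns. You instead first recast the signed-interval adjacency rule as the co-TT-type condition $a_x\le b_y$ and $a_y\le b_x$ (the bigraph analogue of Theorem~3), and then work directly with the definition of Ferrers dimension: each of the two inequalities defines a relation whose neighbourhoods are nested, hence a Ferrers bigraph, giving $E=F_1\cap F_2$; conversely you take threshold representations $f(x)\le g(y)$ of the two Ferrers factors and merge them into a single endpoint assignment by the sign reversal $b_x=C-\gamma_x$, $a_y=C-\delta_y$. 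Your version bypasses Theorem~2 and the matrix case analysis entirely and makes transparent that the two Ferrers factors are exactly the two inequalities of the co-TT condition, while the paper's version stays within the zero-structure machinery it reuses for Theorem~8 and produces integer endpoints directly from the permuted matrix; both are effective. The one step you invoke without proof --- that a Ferrers bigraph is precisely a relation of the form $f(x)\le g(y)$ --- is standard and follows in one line from the nesting of neighbourhoods (order $X$ by inclusion of neighbourhoods and let $g(y)$ be the largest rank of a neighbour of $y$), so it is a legitimate appeal rather than a gap; your four-case check of the preliminary equivalence, including the observation that two negative intervals can never satisfy both inequalities, is also sound.
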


%\begin{proof}
\noindent \textbf{Proof.}  $(a) \Rightarrow (b)$ Let $B$ be a signed-interval bigraph. Without loss of generality we may assume that all the left end point of the intervals are distinct in the representation of $B$. Now, we arrange the rows and columns of $A(B)$ respectively according to the increasing order of the left end points of $I_{x_k}$, $x_k \in X$ and $I_{y_l}$, $y_l \in Y$. Next, suppose that $(i, j)$th entry of $A(B)$ is zero. We consider the following cases.
\begin{case}
Suppose ${I_{x_i}}^+ \cap {I_{y_j}}^+ = \emptyset$. Then we have $r(x_i) < l(y_j)$ or $r(y_j) < l(x_i)$. The first possibility implies for all ${I_{{y_k}}}^+$ $(k>j)$, ${I_{x_i}}^+ \cap {I_{y_k}}^+ = \emptyset$ and for all ${I_{{y_k}}}^-$ $(k>j)$, $ {I_{y_k}}^- \nsubseteq {I_{x_i}}^+ $. Thus $x_iy_k = 0$ for all $k \geqslant j$. In other words all the entries to the right of $(i, j)$th entry are $0$s. Again, if $r(y_j) < l(x_i)$, then for all ${I_{{x_k}}}^+$ $(k>i)$, ${I_{x_i}}^+ \cap {I_{y_j}}^+ = \emptyset$. Also for all ${I_{{x_k}}}^-$ $(k>i)$, $ {I_{x_k}}^- \nsubseteq {I_{y_j}}^+ $. So all the entries are zeros below $(i, j)$th entry.
\end{case}
\begin{case}
Suppose ${{I_y}_j}^- \nsubseteq {{I_x}_i}^+$. Now we have the following possibilities: either $r(y_j) < l(x_i)$ or $l(y_j) > r(x_i)$. In the first possibility, for all $x_k$ $(k > i)$, $ {I_{y_j}}^- \nsubseteq {I_{x_k}}^+ $. So all the entries are zero below $(i, j)$th entry. In the other possibility, for all $y_k$ $(k>j)$, ${I_{y_k}}^+ \cap {I_{x_i}}^+ = \emptyset$ or $ {I_{y_k}}^- \nsubseteq {I_{x_i}}^+ $. Thus all the entries to the right of $(i, j)$th entry are $0$s.
\end{case}
\begin{case}
Suppose ${{I_x}_i}^- \nsubseteq {{I_y}_j}^+$. Then we have the following possibilities: either $r(x_i) < l(y_j)$ or $r(y_j) <l(x_i)$. As before, in the first possibility all the entries to the right of $(i, j)$th entry are $0$s and in the second possibility all the entries below $(i, j)$th entry are $0$s.
\end{case}
Thus, with this arrangement of rows and columns of $A(B)$, no $0$ of $A(B)$ has a $1$ both below and to its right.\par
$(b) \Leftrightarrow (c)$ This follows directly from Theorem 5.

$(c) \Rightarrow (a)$ Let $B = (X, Y, E)$ be a bigraph of Ferrers dimension $2$. Also suppose that the vertices can be arranged as $x_1$, $x_2$, \ldots , $x_n$ and $y_1$, $y_2$, \ldots , $y_m$ respectively in the rows and columns of the biadjacency matrix $A(B)$ of $B$ such that no $0$ has a $1$ both below and to its right, where $x_i \in X$ $(1 \leq i \leq n)$ and $y_j \in Y$ $(1 \leq j \leq m)$.\par

Next, we assign intervals to the vertices of $X$ as follows. Let $x_i$ be any vertex of $X$ and in the $x_i$th row suppose the last $1$ appears in the $k$ th column. Then the interval corresponding to $x_i$ is $[i, k]$ and we denote it by ${{I_x}_i}^+$ if $i \leq k$ or ${{I_x}_i}^-$ if $i > k$. In the similar way we assign an interval $[j, l]$ corresponding to the vertex $y_j$ . It will be denoted by ${{I_y}_j}^+$ if $j \leq l$ or ${{I_y}_j}^-$ if $j > l$.\par

Now, we shall prove that with this interval assignment, $B$ is a signed interval bigraph. Let $x_iy_j \in E$, i.e. $(i, j)$th entry of $A(B)$ is $1$. Let the interval corresponding to $x_i$ is $[i, k]$ and  the interval corresponding to $y_j$ is $[j, l]$. Again $(i, j)$th entry of is $1$ implies $j \leq k$ and $i \leq l$. Now consider the three following cases.\par \vspace{.2cm}
\noindent\textbf{Case~1.\ }Suppose ${{I_x}_i}$ and ${{I_y}_j}$ are both positive intervals. Then  $j \leq k$ and $i \leq l$ imply ${I_{x_i}}^+ \cap {I_{y_j}}^+ \neq \emptyset$.\par \vspace{.2cm}
\noindent\textbf{Case~2.\ }Suppose ${{I_x}_i}$ be a positive interval and ${{I_y}_j}$ be a negative interval. Then  $i \leq k$ and $j > l$. Also $j \leq k$ and $i \leq l$ imply ${I_{y_j}}^-  \subseteq {I_{x_i}}^+$.\par \vspace{.2cm}
\noindent\textbf{Case~3.\ }Suppose ${{I_x}_i}$ be a negative interval and ${{I_y}_j}$ be a positive interval. Then  $j \leq l$ and $i>k$. Again $j \leq k$ and $i \leq l$ imply ${I_{x_i}}^-  \subseteq {I_{y_j}}^+$.\par \vspace{.2cm}

Next, suppose that $(i, j)$th entry is $0$ and all the entries to its right are $0$s. Then $k < j$ and let $l > i$. Consider the following possibilities. If ${{I_x}_i}$ and ${{I_y}_j}$ are both positive intervals, then  $k < j$ implies ${I_{x_i}}^+ \cap {I_{y_j}}^+ = \emptyset$. Again, if ${{I_x}_i}$ is a negative interval and ${{I_y}_j}$ is a positive interval, then $k < j$ implies ${I_{x_i}}^-  \nsubseteq {I_{y_j}}^+$. Next if ${{I_x}_i}$ is positive interval and ${{I_y}_j}$ is a negative interval, then again $k < j$ implies ${I_{y_j}}^-  \nsubseteq {I_{x_i}}^+$.\par

Finally, suppose that $(i, j)$th entry is $0$ and all the entries are $0$s below it. Then we can assume that $k > j$ and $l < i$. Then as before we can show that ${I_{x_i}}^+ \cap {I_{y_j}}^+ = \emptyset$. Also if one of the two intervals, say ${{I_x}_i}$ is a negative interval then ${I_{x_i}}^-  \nsubseteq {I_{y_j}}^+$. This completes the proof of the theorem.\QEDA
%\end{proof}

Signed-interval bigraphs are, therefore, a generalization of interval bigraphs. Interval bigraphs specifically refer to those signed-interval bigraphs where all vertices are assigned positive intervals.\par

In the next theorem,  we provide an analogous characterization of signed-interval graphs, introduced earlier, in terms of their adjacency matrices. It is important to note that in our assumption for signed-interval graphs, self-loops are present for vertices where positive intervals are assigned. Therefore, among the diagonal entries of the adjacency matrix of a signed-interval graph, we find a '1' corresponding to vertices with positive intervals are assigned, while the rest are '0s'. \\
We define a graph $G$ is of \textit{Ferrers dimension 2} if and only if there is an arrangement of the rows of the adjacency matrix $A(G)$ (and same arrangement for the columns) such that no $0$ has a $1$ both below and to its right. In the next theorem we show that signed-interval graphs are equivalent to the graphs of Ferrers dimension 2. Also as in Theorem 7, our proof is constructive. 
\par

\begin{theo}
For a graph $ G= (V, E)$ the following conditions are equivalent.
\begin{enumerate}[(a)]
\item G is a signed-interval graph (i.e. co-TT graph).
\item There is a permutation of the rows (and same permutation of the columns) such that the adjacency matrix $A(G)$ of $G$ is such that no $0$ has a $1$ both below and to its right, where all the diagonal entries of $A(G)$ are not all $1$ or $0$s. 
\item $G$ is a graph of Ferrers dimension $2$.
\end{enumerate}
\end{theo}
%\begin{proof}
\noindent\textbf{Proof.} $(b) \Leftrightarrow (c)$ is trivial.\par
$(a) \Rightarrow (b)$ Let G be a signed-interval graph. Without loss of generality we may assume that the left end point of the intervals assigned to the vertex set $V$ are distinct. We arrange the rows(columns) of the adjacency matrix $A = A(G)$ of $G$ according to the increasing order of the left end points of the corresponding intervals. Next, we shall show that the matrix $A$ is such that if the $(i, j)$th entry of $A$ is a zero then each entry to the right of it is also a zero or each entry below it is also zero.\par
Now consider the following cases.
\begin{case}
Suppose ${I_{v_i}}^+ \cap {I_{v_j}}^+ = \emptyset$ and $i < j$, then $r(v_i) < l(v_j)$. Now $l(v_k) > r(v_i)$, $\forall\ k \geq j$. This implies that all the entries to the right of $(i, j)$th entry are also $0$s. Also all the entries to the below of $(j, i)$th entry are $0$s. Next, assume ${I_{v_i}}^+ \cap {I_{v_j}}^+ =\emptyset$ and $i>j$, then $r(v_j)<l(v_i)$. Similarly we conclude that all the entries below $(i,j)$th entry are $0$s and all the entries to the right of $(j,i)$th entry are $0$s.
\end{case}
\begin{case}
Let ${I_{v_j}}^- \nsubseteq {I_{v_i}}^+ $ and $i < j$, then we have either $l(v_j) > r(v_i)$ or $r(v_j) < l(v_i)$. Then in the first possibility we have $l(v_k) > r(v_i)$, $\forall k \geq j$. So all the entries to the right of $(i, j)$th entry are $0$s and all the entries below $(j,i)$th entry are $0$s. In the other possibility for all $k \geq i$, ${I_{v_j}}^- \nsubseteq {I_{v_k}}^+ $. Thus all the entries below $(i, j)$th entry are $0$s and all the entries to the right of $(j,i)$the entry are $0$s.
\end{case}
\begin{case}
Let ${I_{v_i}}^- \nsubseteq {I_{v_j}}^+ $ and $j < i$, then we have either $r(v_i) < l(v_j)$ or $r(v_j) < l(v_i)$. In the first possibility $\forall k > j$, ${I_{v_i}}^- \nsubseteq {I_{v_k}}^+ $ and hence all the entries to the right of $(i, j)$th entry are $0$s. Also $(j, i)$th entry is a zero and all the entries below $(j, i)$th entry are $0$s. Similarly in the second possibility we can show that all the entries to the right of $(j, i)$th entry are $0$s and all the entries below $(i, j)$th entry are $0$s.
\end{case}
It can be easily observed that the matrix $A$ is symmetric. Without loss of generality may assume the $(i, j)$th entry is $1$. Then we have either ${I_{v_i}}^+ \cap {I_{v_j}}^+ \neq \emptyset$ or ${I_{v_i}}^- \subseteq {I_{v_j}}^+$ or ${I_{v_j}}^- \subseteq {I_{v_i}}^+ $. In any case $(j, i)$th entry is also a  $1$ and if the $(i, j)$th entry of $A$ is $0$, then we have  ${I_{v_i}}^+ \cap {I_{v_j}}^+ = \emptyset$ or ${I_{v_j}}^- \nsubseteq {I_{v_i}}^+ $ or  ${I_{v_i}}^- \nsubseteq {I_{v_j}}^+ $. Then also $(j, i)$th entry of $A$ is also $0$.\par
Now by Theorem~$2$, $A = A(G)$ is a matrix of Ferrers dimension $2$, where all the diagonal entries are not $1$ or $0$s as the intervals are not all positive or negative. \par
$(b) \Rightarrow (a)$ Let the adjacency matrix $A=A(G)$ of $G$ is of Ferrers dimension $2$, where all diagonal entries are not $1$ or $0$s. Thus we can arrange the rows(columns) of $A$ such that no $0$ has a $1$ both to its right and below it. Now, we can construct a signed-interval representation of $G$ as described in the Theorem $8$.\QEDA \vspace{.5cm}
%\end{proof} 
\par It may be noted that when all the diagonal entries are one then a signed-interval graph reduces to an interval graph.

\section{Forbidden induced subgraphs of co-TT graphs}
In this section, we will provide a characterization of co-TT graphs based on forbidden induced subgraphs. This problem was originally posed by Monma, Reed, and Trotter \cite{24} and has also been mentioned in \cite{11, 14}.\\
In \cite{14}, Golumbic et al. characterized split co-TT graphs in terms of forbidden induced subgraphs. Additionally, Golovach et al. \cite{11} introduced a recognition algorithm for co-TT graphs with a time complexity of O($n^2$).
\par As previously mentioned, if $G = (V, E)$ is a two-clique circular arc graph, then its complement $\Bar{G}$, can be represented as the bipartite graph $B = (X, Y, E)$, where E consists of the non-edges of $G$.
\par Trotter and Moore \cite{29} characterized two-clique circular arc graphs by identifying specific forbidden induced subgraphs. They presented these forbidden subgraphs as a set system and established that a graph $G$ is a two-clique circular arc graph if and only if its complement, $\Bar{G}$, does not contain any induced subgraphs of the form $G_1, G_2, G_3$, and several infinite families $C_i (i \geq 3), T_i, W_i, D_i, M_i, N_i (i \geq 1)$, as illustrated in  Fig. 6.\\
\begin{figure}[H]\label{fig7}
$\mathcal{C}_3 = \{\{1,2\},\{2,3\},\{3,1\}\}$ \\
$\mathcal{C}_4 = \{\{1,2\},\{2,3\},\{3,4\},\{4,1\}\}$ \\
$\mathcal{C}_5 = \{\{1,2\},\{2,3\},\{3,4\},\{4,5\},\{5,1\}\}$ \\
\dots \\
$\mathcal{T}_1 = \{\{1,2\},\{2,3\},\{3,4\},\{2,3,5\},\{5\}\}$ \\
$\mathcal{T}_2 = \{\{1,2\},\{2,3\},\{3,4\},\{4,5\},\{2,3,4,6\},\{6\}\}$ \\
$\mathcal{T}_3 = \{\{1,2\},\{2,3\},\{3,4\},\{4,5\},\{5,6\},\{2,3,4,5,7\},\{7\}\}$ \\
\dots \\
$\mathcal{W}_1 = \{\{1,2\},\{2,3\},\{1,2,4\},\{2,3,4\},\{4\}\}$ \\
$\mathcal{W}_2 = \{\{1,2\},\{2,3\},\{3,4\},\{1,2,3,5\},\{2,3,4,5\},\{5\}\}$ \\
$\mathcal{W}_3 = \{\{1,2\},\{2,3\},\{3,4\},\{4,5\},\{1,2,3,4,6\},\{2,3,4,5,6\},\{6\}\}$ \\
\dots \\
$\mathcal{D}_1 = \{\{1,2,5\},\{2,3,5\},\{3\},\{4,5\},\{2,3,4,5\}\}$ \\
$\mathcal{D}_2 = \{\{1,2,6\},\{2,3,6\},\{3,4,6\},\{4\},\{5,6\},\{2,3,4,5,6\}\}$ \\
$\mathcal{D}_3 =\{\{1,2,7\},\{2,3,7\},\{3,4,7\},\{4,5,7\},\{5\},\{6,7\},\{2,3,4,5,6,7\}\}$ \\
\dots \\
$\mathcal{M}_1 = \{\{1,2,3,4,5\},\{1,2,3\},\{1\},\{1,2,4,6\},\{2,4\},\{2,5\}\}$ \\
$\mathcal{M}_2 = \{\{1,2,3,4,5,6,7\},\{1,2,3,4,5\},\{1,2,3\},\{1\},\{1,2,3,4,6,8\},\{1,2,4,6\},\{2,4\},\{2,7\}\}$ \\
$\mathcal{M}_3 =\{\{1,2,3,4,5,6,7,8,9\},\{1,2,3,4,5,6,7\},\{1,2,3,4,5\},\{1,2,3\},\{1\}, \{1,2,3,4,5,6,8,10\},$

\hfill
$\{1,2,3,4,6,8\},\{1,2,4,6\},\{2,4\},\{2,9\}\}$ \\
\dots \\
$\mathcal{N}_1 = \{\{1,2,3\},\{1\},\{1,2,4,6\},\{2,4\},\{2,5\},\{6\}\}$ \\
$\mathcal{N}_2 = \{\{1,2,3,4,5\},\{1,2,3\},\{1\},\{1,2,3,4,6,8\},\{1,2,4,6\},\{2,4\},\{2,7\},\{8\}\}$ \\
$\mathcal{N}_3 = \{\{1,2,3,4,5,6,7\},\{1,2,3,4,5\},\{1,2,3\},\{1\},\{1,2,3,4,5,6,8,10\},\{1,2,3,4,6,8\},$

\hfill 
$\{1,2,4,6\},\{2,4\},\{2,9\},\{10\}\}$ \\
\dots \\
$\mathrm{G}_1 = \{\{1,3,5\},\{1,2\},\{3,4\},\{5,6\}\}$ \\
$\mathrm{G}_2 = \{\{1\},\{1,2,3,4\},\{2,4,5\},\{2,3,6\}\}$ \\
$\mathrm{G}_3 = \{\{1,2\},\{3,4\},\{5\},\{1,2,3\},\{1,3,5\}\}$\\
 \caption*{Fig. 6. Forbidden families of two-clique circular arc graphs as in \cite{29}.}
    \label{}
\end{figure}

\par Feder, Hell, and Huang \cite{9} provided an explanation of derivation of the forbidden bipartite graphs from Fig. 6. These bipartite graphs are presented in Fig. 7.\\
\begin{figure}[H]
\centering
\begin{minipage}{.5\textwidth}
  \centering
    \begin{tikzpicture}[scale=.7]
\tikzstyle{every node}=[circle, draw, fill=black,
                        inner sep=1pt, minimum width=2.5pt]
 \node (a1) at (1,3){};
    \node(a2) at(1,4){};
    \node(a3) at (2,3){};
    \node(a4) at (2,4){};
    \node(a5) at (1,2){};
    \node(a6) at (2,2){};
    \node(a7) at(1,1){};
    \node(a8) at (2,1){};
    \node(a9) at (0,2){};
    \node(a10) at (0,3){};
    \node(a11) at(3,3){};
    \node(a12) at (3,2){};
    \draw (a1)--(a2);
    \draw (a3)--(a4);\draw(a2)--(a4);
    \draw(a1)--(a3);
    \draw(a3)--(a11);
    \draw(a11)--(a12);
    \draw(a12)--(a6);
    \draw(a6)--(a5);
    \draw(a5)--(a9);
    \draw(a9)--(a10);
    \draw(a10)--(a1);
    \draw(a1)--(a12);
    \draw(a3)--(a9);
    \draw(a1)--(a5);
    \draw(a3)--(a6);
    \draw(a6)--(a8);
    \draw(a5)--(a7);
    
    \end{tikzpicture}\\
   $\mathcal{M}_1 $
\end{minipage}%
\begin{minipage}{.5\textwidth}
  \centering
 \begin{tikzpicture}[scale=.7]
\tikzstyle{every node}=[circle, draw, fill=black,
                        inner sep=1pt, minimum width=2.5pt]
  \node(a) at(7,7){};
   \node(b) at (8,7){};
   \node(c) at (7,8){};
   \node(d) at (8,8){};
    \node(e) at (5,6){};
   \node(f) at (5,5){};
   \node(g) at (5,4){};
   \node(h) at (5,3){};
   \node(i) at (7,2){};
   \node(j) at (8,2){};
   \node(k) at (10,3){};
   \node(l) at (10,4){};
   \node(m) at (10,5){};
   \node(n) at (10,6){};
   \node(o) at (7,1){};
   \node(p) at (8,1){};
   \draw(f)--(m);
   \draw(a)--(b);
   \draw(a)--(c);
   \draw(c)--(d);
   \draw(d)--(b);
   \draw(a)--(e);
   \draw(e)--(f);
   \draw(f)--(g);
   \draw(g)--(h);
   \draw(h)--(i);
   \draw(i)--(j);
   \draw(j)--(k);
   \draw(k)--(l);
   \draw(l)--(m);
   \draw(m)--(n);
   \draw(n)--(b);
   \draw(i)--(o);
   \draw(j)--(p);
   \draw(a)--(g);
   \draw(a)--(i);
   \draw(a)--(k);
   \draw(a)--(m);
   \draw(b)--(f);
   \draw(b)--(h);
   \draw(b)--(j);
   \draw(b)--(l);
   \draw(f)--(i);
   \draw(g)--(j);
   \draw(i)--(l);
   \draw(j)--(m);
  \draw[dashed](11,4.5)--(13,4.5);

    \end{tikzpicture}\\
   $\mathcal{M}_2 $
\end{minipage}

\end{figure}
\begin{figure}[H]
\centering
\begin{minipage}{.5\textwidth}
  \centering
    \begin{tikzpicture}[scale=.7]
\tikzstyle{every node}=[circle, draw, fill=black,
                        inner sep=1pt, minimum width=2.5pt]
\node(a) at (0,1){};
\node(b) at (1,1){};
\node(c) at (2,1){};
\node(d) at (2,0){};
\node(e) at (3,0){};
\node(f) at (3,1){};
\node(g) at (4,1){};
\node(h) at (5,1){};
\node(i) at (2,2){};
\node(j) at (2,3){};
\node(k) at (3,2){};
\node(l) at (3,3){};
\draw(a)--(b);
\draw(b)--(c);
\draw(c)--(d);
\draw(d)--(e);
\draw(e)--(f);
\draw(f)--(g);
\draw(g)--(h);
\draw(c)--(f);
\draw(c)--(i);
\draw(i)--(j);
\draw(i)--(k);
\draw(k)--(l);
\draw(f)--(k);
 
    \end{tikzpicture}\\
   $\mathcal{N}_1 $
\end{minipage}%
\begin{minipage}{.5\textwidth}
  \centering
 \begin{tikzpicture}[scale=.7]
\tikzstyle{every node}=[circle, draw, fill=black,
                        inner sep=1pt, minimum width=2.5pt]
\node(a) at (5,6){};
\node(b) at (6,6){}; 
\node(c) at (4,5){};
\node(d) at (4,4){};
\node(e) at (4,3){};
\node(f) at (5,2){};
\node(g) at (6,2){};
\node(h) at (7,3){};
\node(i) at (7,4){};
\node(j) at (7,5){};
\node(k) at (8,6){};
\node(l) at (3,6){};
\node(m) at (8,4){};
\node(n) at (9,4){};
\node(o) at (3,4){};
\node(p) at (2,4){};
\draw(a)--(b);
\draw(a)--(c);
\draw(c)--(d);
\draw(c)--(l);
\draw(d)--(o);
\draw(o)--(p);
\draw(d)--(e);
\draw(e)--(f);
\draw(f)--(g);
\draw(g)--(h);
\draw(h)--(i);
\draw(i)--(m);
\draw(m)--(n);
\draw(i)--(j);
\draw(j)--(k);
\draw(j)--(b);
\draw(a)--(i);
\draw(b)--(d);
\draw(c)--(j);
\draw(c)--(f);
\draw(d)--(i);
\draw(d)--(g);
\draw(f)--(i);
\draw(g)--(j);
\draw[dashed](10,4)--(12,4);
    \end{tikzpicture}\\
   $\mathcal{N}_2 $
\end{minipage}

\end{figure}
\begin{figure}[H]
    \centering
    \begin{tikzpicture}[scale=1]
\tikzstyle{every node}=[circle, draw, fill=black,
                        inner sep=0pt, minimum width=2.5pt]
\node(a) at (4,5){};
\node(b) at (5,5){};
\node(c) at (6,4){};
\node(d) at (6,3){};
\node(e) at (5,2){};
\node(f) at (4,2){};
\node(g) at (3,3){};
\node(h) at (3,4){};
\draw (g)--(h);
\draw(h)--(a);
\draw(a)--(b);
\draw(b)--(c);
\draw(c)--(d);
\draw(d)--(e);
\draw(e)--(f);
\draw[dashed](f)--(g);
   \end{tikzpicture}\\
    $\mathcal{C}_i $
\end{figure}

\begin{figure}[H]
\centering
\begin{tikzpicture}[scale=1]
{\tikzstyle{every node}=[circle, draw, fill=black,
                        inner sep=0pt, minimum width=2.5pt]
\draw (0:0) node(a){}--++(100:1)node(b){}--++(150:1)node(c){}--++(100:-1)node(d){}--cycle;
\draw (a)--++(0:.6)node{}--++(0:.6)node(e){};
\draw[dashed] (e)--++(0:1)node(f){};
\draw (f)--++(80:1)node(g){}--++(30:1)node(h){}--++(80:-1)node(i){}--++(30:-1) (b)--(e) (b)--(f) (g)--(a) (g)--(e);
\path [name path=l] (b) --++ (30:2);
\path [name path=m] (g) --++ (-30:-2);
\path [name intersections={of=l and m, by=j}];
\draw (b)--(j)node{}--(g) (j)--++(90:.7)node(k){};
\draw[line width=.4mm] (c)--(d) (h)--(i) (j)--(k);

\coordinate (a1) at ([xshift=2.4cm]f);
\draw (a1)node{}--++(0:.6)node(b1){}--++(90:1)node(c1){}--++(0:1.6)node(d1){}--++(90:1)node(e1){}--++(0:-1.6)node(f1){}--++(90:-1) (b1)--++(0:.6)node(g1){}--(d1)--(a1);
\draw[dashed] (g1)--++(0:1.6)node(h1){};
\draw (h1)--(c1);
\path[name path=i1] (a1)--++(100:1);
\path[name path=j1] (c1)--++(30:-1);
\path[name intersections={of=i1 and j1, by=k1}];
\path[name path=l1] (h1)--++(80:1);
\path[name path=m1] (d1)--++(-30:1);
\path[name intersections={of=l1 and m1, by=n1}];
\draw (a1)--(k1)node{}--(c1) (k1)--++(110:1)node(o1){};
\draw (h1)--(n1)node{}--(d1) (n1)--++(70:1)node(p1){};
\draw[line width=.4mm] (k1)--(o1) (n1)--(p1) (e1)--(f1);
\coordinate (z1) at ([xshift=.8cm]b1);

\coordinate (a2) at ([xshift=1.2cm]h1);
\draw (a2)node{}--++(0:.6)node(b2){}--++(0:.6)node(c2){}--++(0:.6)node{}--++(0:.6)node(d2){};
\draw[dashed] (d2)--++(0:1)node(e2){};
\draw (e2)--++(0:.6)node(f2){}--++(0:.6)node(g2){};
\path[name path=h2] (c2)--++(45:2);
\path[name path=i2] (e2)--++(-45:-2);
\path[name intersections={of=h2 and i2, by=j2}];
\draw (c2)--(j2)node{}--(e2) (j2)--(d2) (j2)--++(90:.6)node(k2){}--++(90:.6)node(l2){};
\draw[line width=.4mm] (a2)--(b2) (f2)--(g2) (k2)--(l2);

\coordinate (a3) at ([yshift=-3.6cm]d);
\draw (a3)node{}--++(0:.6)node(b3){}--++(0:.6)node{}--++(0:.6)node(c3){}--++(90:.7)node{}--++(90:.7)node(d3){}--++(90:.7)node(e3){} (c3)--++(0:.6)node{}--++(0:.6)node(f3){}--++(0:.6)node(g3){};
\draw[line width=.4mm] (a3)--(b3) (e3)--(d3) (f3)--(g3);

\coordinate (a4) at ([xshift=1.5cm]g3);
\draw (a4)node{}--++(0:.6)node(b4){}--++(0:.6)node(c4){}--++(0:1.2)node(d4){}--++(0:.6)node(e4){}--++(0:.6)node(f4){} (c4)--++(90:1)node{}--++(90:1)node(g4){}--++(0:1.2)node(h4){}--++(90:-1)node{}--++(0:-1.2) (d4)--++(90:1);
\draw[line width=.4mm] (a4)--(b4) (e4)--(f4) (g4)--(h4);
\coordinate (z2) at ([xshift=.6cm]c4);

\coordinate (a5) at ([xshift=1.5cm]f4);
\draw (a5)node{}--++(0:.7)node(b5){}--++(90:.8)node{}--++(0:.9)node(c5){}--++(0:.9)node{}--++(90:-.8)node(d5){}--++(0:.7)node(e5){} (b5)--++(0:.9)node(f5){}--(e5) (f5)--(c5)--++(90:.7)node(g5){}--++(90:.7)node(h5){};
\draw[line width=.4mm] (a5)--(b5) (d5)--(e5) (g5)--(h5);
}

\coordinate (z) at (0,-0.5);
\draw (z-|j)node{$\mathcal{W}_i$} (z-|z1)node{$\mathcal{D}_i$} (z-|j2)node{$\mathcal{T}_i$};
\coordinate (y) at ([yshift=-.5cm]c3);
\draw (y)node{$\mathrm{G}_1$} (y-|z2)node{$\mathrm{G}_3$} (y-|c5)node{$\mathrm{G}_2$};
\end{tikzpicture}\\ \vspace{.5cm}
Fig. 7. Forbidden bigraphs from Fig. 6.\\
\end{figure}\vspace{-.4cm}

In the same paper \cite{9} they also introduced the notion of edge-asteroid to unify the families of graphs depicted in Fig. 7. An \textit{edge-asteroid} is a set of $2k+1$ edges $e_0, e_1, ..., e_{2k}$ such that, for $i= 0, 1, ..., 2k$ there is a path joining $e_i$ and $e_{i+1}$ and containing both $e_i$ and $e_{i+1}$ that avoids the neighbors of $e_{i+k+1}$; the subscript addition is modulo $2k+1$. Then they characterized two-clique circular arc graphs in the following theorem.
\begin{theo}[\cite{9}]\label{t10}    A graph is a two clique circular arc graph if and only if its complement does not contain induced cycle of length at least six and an edge-asteroid.
\end{theo}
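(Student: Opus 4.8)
The plan is to obtain the statement by \emph{collapsing} the Trotter--Moore obstruction list of Theorem~8 into the two phenomena named here. Write $G$ for the candidate graph and $B=\bar G=(X,Y,E)$ for the bipartite graph whose edges are the non-edges of $G$ between the two cliques. By Theorem~8, $G$ is a two-clique circular arc graph if and only if $B$ has no induced subgraph isomorphic to one of $\mathcal G_1,\mathcal G_2,\mathcal G_3$ or to a member of the infinite families $\mathcal C_i\ (i\ge 3)$, $\mathcal T_i,\mathcal W_i,\mathcal D_i,\mathcal M_i,\mathcal N_i\ (i\ge 1)$. Hence it suffices to show that $B$ contains one of these configurations as an induced subgraph if and only if $B$ contains an induced cycle of length at least six or an edge-asteroid. (Through Theorem~7 this is the same as characterizing the bigraphs of Ferrers dimension at most $2$.)

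First comes the easy direction. Inspecting Fig.~7 and Fig.~8, the set system $\mathcal C_i$ is exactly the bipartite cycle $C_{2i}$, so $\{\mathcal C_i:i\ge 3\}$ is precisely the family of even cycles of length at least six. For each of the remaining families one exhibits an explicit edge-asteroid sitting inside it: three suitably chosen edges $e_0,e_1,e_2$ in each of $\mathcal G_1,\mathcal G_2,\mathcal G_3$ (the $k=1$ case), and $2k+1$ edges read off along the ``spine'' of $\mathcal T_i,\mathcal W_i,\mathcal D_i,\mathcal M_i,\mathcal N_i$, where the pendant and crossing edges of those configurations supply, for each index $i$, a path through $e_i$ and $e_{i+1}$ missing the neighbourhood of $e_{i+k+1}$ (subscripts modulo $2k+1$). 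One then observes that an edge-asteroid passes to induced supergraphs: if $H$ is an induced subgraph of $B$ then $N_H(v)=N_B(v)\cap V(H)$ for every $v\in V(H)$, so the witnessing paths, which lie entirely in $V(H)$, still avoid the full $B$-neighbourhood of the opposite edge. Consequently, if $B$ has no induced cycle of length at least six and no edge-asteroid, then $B$ contains none of the Trotter--Moore configurations, and $G$ is a two-clique circular arc graph by Theorem~8.

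The substantive direction is the converse: an edge-asteroid in $B$ must force one of the listed obstructions. Here I would take an edge-asteroid $e_0,\dots,e_{2k}$ in $B$ that is minimal, first in its number of edges $2k+1$ and then in the total length of the $2k+1$ witnessing paths, and analyze its shape. Replacing each witnessing path by a shortest (hence induced) path with the same endpoints inside the union of the asteroid, and exploiting minimality to eliminate chords, reduces every connecting path to a short canonical form; bipartiteness fixes the parities, and one then checks that $2k+1=3$ yields an induced copy of one of $\mathcal G_1,\mathcal G_2,\mathcal G_3$ (a finite case distinction on the few admissible path lengths), while $2k+1\ge 5$ forces the cyclic arrangement of three long connecting paths around a triangle-like hub to be an induced member of one of $\mathcal T_i,\mathcal W_i,\mathcal D_i,\mathcal M_i,\mathcal N_i$ --- with the caveat that some degenerate configurations instead expose an induced cycle $C_{2\ell}$ with $\ell\ge 3$, which is equally acceptable. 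Carrying out this classification --- matching every minimal edge-asteroid against the five infinite families of Fig.~8 --- is the main obstacle, and it is essentially the whole argument of \cite{9}; everything else is bookkeeping around Theorems~7 and~8. Once it is in place, the two implications combine to give that $G$ is a two-clique circular arc graph if and only if $B=\bar G$ contains neither an induced cycle of length at least six nor an edge-asteroid.
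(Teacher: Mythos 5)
The paper does not prove this statement at all: it is quoted verbatim from Feder, Hell and Huang \cite{9} as background, so your argument has to stand on its own, and at present it does not. The easy half of your reduction is sound: via Theorem~8 it suffices to compare the Trotter--Moore list with the condition ``no induced cycle of length at least six and no edge-asteroid''; the families $\mathcal{C}_i$ are exactly the even cycles of length at least six; your observation that an edge-asteroid in an induced subgraph $H$ remains an edge-asteroid in $B$ (because $N_H(v)=N_B(v)\cap V(H)$) is correct; and exhibiting an edge-asteroid inside each of $\mathcal{G}_1,\mathcal{G}_2,\mathcal{G}_3$ and $\mathcal{T}_i,\mathcal{W}_i,\mathcal{D}_i,\mathcal{M}_i,\mathcal{N}_i$ is a finite verification (which the paper also asserts without detail), even though you do not actually display the witnessing edges and paths.

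The genuine gap is the converse direction: that a bigraph containing an edge-asteroid must contain one of the Trotter--Moore configurations (or a long induced cycle) as an induced subgraph. You explicitly defer this, calling the classification of minimal edge-asteroids ``the main obstacle'' and ``essentially the whole argument of \cite{9}'' --- which is precisely to say the substantive content of the theorem is not proved. As sketched, the reduction is also not watertight: replacing a witnessing path by a shortest path inside the union of the asteroid need not preserve the defining avoidance property (the shorter path may meet the neighbourhood of the opposite edge), minimality in ``number of edges, then total path length'' is never shown to force an induced canonical configuration, and the description of the $2k+1\ge 5$ case as ``three long connecting paths around a triangle-like hub'' does not match the combinatorics of $2k+1$ edges with $2k+1$ witnessing paths. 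Without carrying out this classification (or giving an independent argument, as \cite{9} does), what you have is an outline of a proof strategy, not a proof.
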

It can be easily checked that each graph of the families $T_i, W_i, D_i, M_i, N_i (i\geq 1)$  as well as the graphs  $G_1, G_2, G_3$ contains an edge asteroid.
Consequently, from \cref{t7} and \cref{t10}, it can be deduced that the bipartite graphs depicted in Fig. 7 serve as the forbidden induced subgraphs for bipartite graphs with Ferrers dimension 2. Therefore, they are also the forbidden induced subgraphs for signed-interval bigraphs.\\

\par A graph $G$ is a \textit{split graph} if its vertex set $V(G)$ can be partitioned into two disjoint subsets, where one is a stable set (a set of vertices where no two vertices are adjacent) and other is a complete graph. \\
\par The following theorem provides two characterizations of split graphs.
\begin{theo}[\cite{12}]
    A graph $G$ is a split graph if and only if $G$ and its complement $\Bar{G}$ are chordal graphs or equivalently if and only if $G$ does not contain $C_n (n\geq 4)$ as an induced cycle.
   
\end{theo}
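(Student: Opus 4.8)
The plan is to prove the central equivalence, that $G$ is split iff both $G$ and $\bar G$ are chordal, and then to deduce the forbidden-subgraph formulation as a short corollary. The implication from split to chordal-and-co-chordal is routine; the reverse implication carries the content.

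If $G$ is split, fix a partition $V=K\sqcup I$ with $K$ a clique and $I$ a stable set. I would exhibit a perfect elimination ordering by listing the vertices of $I$ first, in any order, and then the vertices of $K$: when a vertex $v\in I$ is removed, its remaining neighbours all lie in the clique $K$ and so induce a clique, making $v$ simplicial at that step; and each vertex of $K$ is simplicial because what remains when we reach it is a clique. Hence $G$ is chordal. Since $K$ is stable and $I$ is a clique in $\bar G$, the same partition shows $\bar G$ is split, hence chordal by the identical argument. One can also argue directly: an induced $C_n$ with $n\ge4$ is triangle-free, so it meets $K$ in at most two vertices and meets $I$ in a subset with no two cycle-consecutive vertices, and a brief count rules this out.

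For the converse, assume $G$ and $\bar G$ are both chordal. Then $G$ has no induced $C_4$ or $C_5$, and $\bar G$ chordal means $G$ has no induced $\overline{C_4}=2K_2$; so it is enough to show that a graph with no induced $2K_2$, $C_4$, or $C_5$ is split. Assume $\omega(G)\ge2$ (otherwise $G$ is edgeless and trivially split), and among all maximum cliques pick one, $K$, minimizing the number of edges of $G$ inside $V\setminus K$. Suppose for contradiction that $V\setminus K$ contains an edge $xy$. By maximality of $K$ there is $p\in K$ with $p\not\sim x$ and $q\in K$ with $q\not\sim y$, and $p\sim q$ whenever $p\neq q$. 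I would then do a short case analysis on the vertices $\{x,y,p,q\}$: the case $p=q$, and, for $p\neq q$, the sub-cases depending on whether $x\sim q$ and whether $y\sim p$. In each case a contradiction appears: either $x$ and $y$ are adjacent to everything in $K\setminus\{p,q\}$, yielding a clique larger than $K$; or swapping $\{p,q\}$ out of $K$ and $\{x,y\}$ in gives a maximum clique with strictly fewer edges outside it, against the choice of $K$; or $\{x,y,p,q\}$, possibly augmented by one more vertex of $K$, induces a $C_4$, a $C_5$, or a $2K_2$. Therefore $V\setminus K$ is stable and $V=K\sqcup(V\setminus K)$ witnesses that $G$ is split.

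The main obstacle is exactly this case analysis. The extreme cases are immediate — both of $x\sim q$, $y\sim p$ gives an induced $C_4$; neither gives an induced $2K_2$ — but the mixed sub-cases (only one of the two cross-edges present, or $p=q$ so that one must first extract a second non-neighbour of $x$ or $y$ in $K$) initially yield only an induced $P_4$ and must be driven one step further, invoking a further vertex of $K$ or the extremality of $K$; overlooking one of these configurations is the easy mistake. Once the equivalence is proved, the forbidden-subgraph form is obtained by unwinding what chordality of $G$ and of $\bar G$ forbid in $G$, namely induced $C_n$ and induced $\overline{C_n}$ for $n\ge4$: since $\overline{C_4}=2K_2$, $\overline{C_5}=C_5$, and $\overline{C_n}$ already contains an induced $C_4$ for every $n\ge6$, the whole condition reduces to the absence of an induced $C_4$, $C_5$, or $2K_2$.
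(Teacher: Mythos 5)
The paper does not prove this statement at all: it is the classical Foldes--Hammer/Golumbic split-graph characterization, quoted from reference [12] without proof, so there is no in-paper argument to compare against. Your overall plan is the standard textbook proof (perfect elimination ordering for the easy direction; a maximum clique $K$ chosen to minimize the number of edges of $G[V\setminus K]$ for the converse), and the easy direction, together with the reduction of ``$G$ and $\bar G$ chordal'' to ``no induced $2K_2$, $C_4$, $C_5$'', is correct and complete.

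The converse, however, has a genuine gap: the case analysis that carries all the content is only asserted, and the mechanisms you name for closing the hard cases do not work as stated. If $p\neq q$ and $x,y$ are adjacent to everything in $K\setminus\{p,q\}$, the set $(K\setminus\{p,q\})\cup\{x,y\}$ has the same size as $K$, not larger, so no contradiction with maximality arises; and the two-for-two exchange need not produce ``strictly fewer edges outside,'' because $pq$ is an edge and it moves outside the new clique (the standard proof exchanges a single vertex, a common non-neighbour $z$ of $x$ and $y$, for $x$). More importantly, the genuinely hard configuration --- $p=q=z$ non-adjacent to both $x$ and $y$, every other vertex of $K$ adjacent to $x$ (say), and $z$ having neighbours in $V\setminus K$ forced by the extremal choice of $K$ --- is exactly where the induced $C_5$ (or a further $2K_2$/$C_4$) must be exhibited, and your sketch never produces it; since $C_5$ is indispensable in the forbidden list, any correct proof must display this step explicitly. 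Until that case is actually worked out, the converse direction is not established. (Separately, note that the statement as printed in the paper is imprecise: forbidding only induced $C_n$, $n\geq 4$, is just chordality; the correct forbidden family, which you in fact use, is $\{2K_2, C_4, C_5\}$.)
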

The second characterization can be derived from the fact that $G$ is a split graph if and only if $G$ does not contain $C_4, C_5$, and $2K_2$ as induced subgraphs.\\
Golumbic, Weingarten, and Limouzy \cite{14} introduced the class of split co-TT graphs (i.e. the intersection of the classes of split graphs and co-TT graphs). In other words the graphs which belong to both the classes of split graphs and also co-TT graphs. Also they have provided the forbidden induced subgraphs of split co-TT graphs by building upon the work of Trotter and Moore \cite{29}, as follows.\\
Let $B=(X,Y,E)$ be any bigraph of Fig. 7. Then its adjacency matrix $A(B)$ can be written in the form 
\begin{figure}[H]
    \centering
    \begin{tikzpicture}[line cap=round,line join=round,x=1.0cm,y=1.0cm,scale=.7]
\clip(-1.5,1.) rectangle (7.,7.);
\draw [line width=.2pt] (2.,6.)-- (2.,2.);
\draw [line width=.2pt] (2.,2.)-- (2.5,2);
\draw [line width=.2pt] (2.,6.)-- (2.5,6);
\draw [line width=.2pt] (5.5,6.0)-- (6,6.);
\draw [line width=.2pt] (6.,6.)-- (6.,2.);
\draw [line width=.2pt] (5.5,2.)-- (6,2);
\draw [line width=.2pt] (4.,6.)-- (4.,2.);
\draw [line width=.2pt] (2.,4.)-- (6.,4.);
\draw (-1.8,4.6) node[anchor=north west,scale=1.3] {$A(B)=$};
\draw (2.5,5.8) node[anchor=north west,scale=1.5] {$0$};
\draw (4.5,5.7) node[anchor=north west,scale=1.5] {$A$};
\draw (2.5,3.8) node[anchor=north west,scale=1.5] {$A^t$};
\draw (4.5,3.8) node[anchor=north west,scale=1.5] {$0$};
\draw (2.6,7) node[anchor=north west,scale=1.] {$X$};
\draw (4.5,7) node[anchor=north west,scale=1.] {$Y$};
\draw (1,5.5) node[anchor=north west,scale=1.] {$X$};
\draw (1,3.5) node[anchor=north west,scale=1.] {$Y$};
\end{tikzpicture}
  
\end{figure}
where $A$ is the biadjacency matrix of $B$ and also it is not of Ferrers dimension 2. Furthermore, if we transform the vertices of either the X-partite set or the Y-partite set into a complete graph, then B evolves into a chordal graph, and the adjacency matrices of these chordal graphs take one of the following forms:

\begin{figure}[H]
    \centering
    \begin{tikzpicture}[line cap=round,line join=round,x=1.0cm,y=1.0cm,scale=.8]
\clip(-1.5,1.) rectangle (15.,7.);
\draw [line width=.2pt] (2.,6.)-- (2.,2.);
\draw [line width=.2pt] (2.,2.)-- (2.5,2.0);
\draw [line width=.2pt] (2.,6.)-- (2.5,6.0);
\draw [line width=.2pt] (5.5,6.0)-- (6.,6.);
\draw [line width=.2pt] (6.,6.)-- (6.,2.);
\draw [line width=.2pt] (6.,2.)-- (5.5,2.0);
\draw [line width=.2pt] (4.,6.)-- (4.,2.);
\draw [line width=.2pt] (2.,4.)-- (6.,4.);
\draw (-1.2,4.5) node[anchor=north west,scale=1.2] {$A_1(B)=$};
\draw (2.5,5.7) node[anchor=north west,scale=1.5] {$1$};
\draw (4.5,5.7) node[anchor=north west,scale=1.5] {$A$};
\draw (2.3,3.8) node[anchor=north west,scale=1.5] {$A^t$};
\draw (4.6,3.8) node[anchor=north west,scale=1.5] {$0$};
\draw (2.8,6.7) node[anchor=north west] {$X$};
\draw (4.9,6.7) node[anchor=north west] {$Y$};
\draw (1.2,5.3) node[anchor=north west] {$X$};
\draw (1.2,3.3) node[anchor=north west] {$Y$};
\draw [line width=.2pt] (10.,6.)-- (10.5,6);
\draw [line width=.2pt] (10.,6.)-- (10.,2.);
\draw [line width=.2pt] (10.,2.)-- (10.5,2.0);
\draw [line width=.2pt] (13.5,6)-- (14.,6.);
\draw [line width=.2pt] (14.,6.)-- (14.,2.);
\draw [line width=.2pt] (14.,2.)-- (13.5,2);
\draw (10.8,6.6) node[anchor=north west] {$X$};
\draw (12.9,6.6) node[anchor=north west] {$Y$};
\draw (9.2,5.4) node[anchor=north west] {$X$};
\draw (9.2,3.1) node[anchor=north west] {$Y$};
\draw (6.9,4.5) node[anchor=north west,scale=1.2] {$A_2(B)=$};
\draw [line width=.2pt] (10.,4.)-- (14.,4.);
\draw [line width=.2pt] (12.,6.)-- (12.,2.);
\draw (10.6,5.7) node[anchor=north west,scale=1.5] {$0$};
\draw (10.5,3.8) node[anchor=north west,scale=1.5] {$A^t$};
\draw (12.6,5.7) node[anchor=north west,scale=1.5] {$A$};
\draw (12.5,3.8) node[anchor=north west,scale=1.5] {$1$};
\draw (6.,4.4) node[anchor=north west] {or};
\end{tikzpicture}
    
\end{figure}
Now, the graphs corresponding to these two matrices are split graphs, but they are not co-TT graphs (since they do not possess the property of  Ferrers dimension 2 [Theorem 8]). As a result, they serve as forbidden graphs for split co-TT graphs, and consequently, they are also forbidden graphs for co-TT graphs. We denote the forbidden families of split co-TT graph as $\mathcal{\widetilde{S}}$.

The following forbidden induced subgraph (Fig. 8) for split co-TT graphs is derived from the bigraph $C_3$ of Fig. 7 as described earlier .
\begin{figure}[H]
\centering
\begin{tikzpicture}[line cap=round,line join=round,x=1.0cm,y=1.0cm,scale=.5]
\clip(4.5,3.) rectangle (15.3,14.);
\draw [line width=.1pt] (10.,12.)-- (6.,4.);
\draw [line width=.1pt] (10.,12.)-- (14.,4.);
\draw [line width=.1pt] (8.,8.)-- (12.,8.);
\draw [line width=.1pt] (6.,4.)-- (14.,4.);
\draw [line width=.1pt] (8.,8.)-- (10.,4.);
\draw [line width=.1pt] (12.,8.)-- (10.,4.);
\begin{scriptsize}
\draw [fill=black] (10.,12.) circle (4pt);
\draw (9.3,13.2) node[anchor=north west,scale=1.5] {$x_1$};

\draw [fill=black] (6.,4.) circle (4pt);
\draw (4.5,4.5) node[anchor=north west,scale=1.5] {$x_3$};

\draw [fill=black] (14.,4.) circle (4pt);
\draw (14,4.5) node[anchor=north west,scale=1.5] {$x_2$};

\draw [fill=black] (8.,8.) circle (4pt);
\draw (6.6,8.7) node[anchor=north west,scale=1.5] {$y_3$};

\draw [fill=black] (12.,8.) circle (4pt);
\draw (12,8.7) node[anchor=north west,scale=1.5] {$y_1$};

\draw [fill=black] (10.,4.) circle (4pt);
\draw (9.2,4) node[anchor=north west,scale=1.5] {$y_2$};

\end{scriptsize}
\end{tikzpicture}\\
   Fig. 8. The sun graph $S_3$, which is forbidden induced subgraph of split co-TT graph
\end{figure}
In fact, all the sun graphs $\mathrm{S}_k$ $(k \geq 3)$ can be straightforwardly derived from the class of bipartite graphs $C_i (i \geq 3)$ shown in Fig. 7, following the same approach as for $\mathrm{S}_3$. As previously noted, we've established that signed-interval graphs are equivalent to co-TT graphs. \\
Likewise, the following two graphs are forbidden induced subgraphs for split co-TT graphs and are derived from the graph $G_1$ depicted in Fig. 7.

\begin{figure}[H]
    \centering
    \begin{tikzpicture}[line cap=round,line join=round,x=1.0cm,y=1.0cm,scale=1.2]
\clip(0.,.6) rectangle (8.,5.);
\draw [line width=.2pt] (4.,4.)-- (4.,3.);
\draw [line width=.2pt] (4.,3.)-- (4.,2.);
\draw [line width=.2pt] (4.,2.)-- (4.,1.);
\draw [line width=.2pt] (4.,1.)-- (5.,1.);
\draw [line width=.2pt] (5.,1.)-- (6.,1.);
\draw [line width=.2pt] (6.,1.)-- (7.,1.);
\draw [line width=.2pt] (4.,1.)-- (3.,1.);
\draw [line width=.2pt] (3.,1.)-- (2.,1.);
\draw [line width=.2pt] (2.,1.)-- (1.,1.);
\begin{scriptsize}
\draw [fill=black] (4.,4.) circle (1.5pt);
\draw (4,4.2) node[anchor=north west,scale=1.5] {$x_1$};
\draw [fill=black] (4.,3.) circle (1.5pt);
\draw (4,3.2) node[anchor=north west,scale=1.5] {$y_1$};
\draw [fill=black] (4.,2.) circle (1.5pt);
\draw (4,2.2) node[anchor=north west,scale=1.5] {$x_2$};
\draw [fill=black] (4.,1.) circle (1.5pt);
\draw (4,1.4) node[anchor=north west,scale=1.5] {$y_2$};
\draw [fill=black] (5.,1.) circle (1.5pt);
\draw (4.8,1.) node[anchor=north west,scale=1.5] {$x_3$};
\draw [fill=black] (6.,1.) circle (1.5pt);
\draw (5.8,1) node[anchor=north west,scale=1.5] {$y_3$};
\draw [fill=black] (7.,1.) circle (1.5pt);
\draw (6.8,1) node[anchor=north west,scale=1.5] {$x_5$};
\draw [fill=black] (3.,1.) circle (1.5pt);
\draw (2.8,1) node[anchor=north west,scale=1.5] {$x_4$};
\draw [fill=black] (2.,1.) circle (1.5pt);
\draw (1.8,1) node[anchor=north west,scale=1.5] {$y_4$};
\draw [fill=black] (1.,1.) circle (1.5pt);
\draw (0.8,1) node[anchor=north west,scale=1.5] {$x_6$};

\end{scriptsize}
\end{tikzpicture}\vspace{.5cm}\\
A labeling of the graph $G_1$
  
\end{figure}
\begin{figure}[H]
    \centering
    \begin{tikzpicture}[line cap=round,line join=round,x=1.0cm,y=1.0cm,scale=1.]
\clip(3.,1.5) rectangle (11.,6.5);
\draw [line width=.2pt] (5.,5.)-- (7.,5.);
\draw [line width=.2pt] (7.,5.)-- (8.,4.);
\draw [line width=.2pt] (8.,4.)-- (7.,3.);
\draw [line width=.2pt] (7.,3.)-- (5.,3.);
\draw [line width=.2pt] (5.,3.)-- (4.,4.);
\draw [line width=.2pt] (4.,4.)-- (5.,5.);
\draw [line width=.2pt] (6.,6.)-- (5.,5.);
\draw [line width=.2pt] (6.,6.)-- (7.,5.);
\draw [line width=.2pt] (5.,5.)-- (7.,3.);
\draw [line width=.2pt] (5.,5.)-- (8.,4.);
\draw [line width=.2pt] (5.,5.)-- (5.,3.);
\draw [line width=.2pt] (4.,4.)-- (7.,5.);
\draw [line width=.2pt] (4.,4.)-- (8.,4.);
\draw [line width=.2pt] (4.,4.)-- (7.,3.);
\draw [line width=.2pt] (5.,3.)-- (7.,5.);
\draw [line width=.2pt] (5.,3.)-- (8.,4.);
\draw [line width=.2pt] (7.,5.)-- (7.,3.);
\draw [line width=.2pt] (8.,4.)-- (10.,4.);
\draw [line width=.2pt] (7.,5.)-- (10.,4.);
\draw [line width=.2pt] (7.,3.)-- (10.,4.);
\draw [line width=.2pt] (5.,3.)-- (8.,2.);
\draw [line width=.2pt] (8.,4.)-- (8.,2.);
\draw [line width=.2pt] (4.,4.)-- (5.,2.);
\draw [line width=.2pt] (7.,3.)-- (5.,2.);
\begin{scriptsize}
\draw [fill=black] (5.,5.) circle (1.5pt);
\draw (4.3,5.4) node[anchor=north west,scale=1.5] {$x_1$};
\draw [fill=black] (7.,5.) circle (1.5pt);
\draw (7,5.4) node[anchor=north west,scale=1.5] {$x_2$};
\draw [fill=black] (8.,4.) circle (1.5pt);
\draw (8,4) node[anchor=north west,scale=1.5] {$x_3$};
\draw [fill=black] (7.,3.) circle (1.5pt);
\draw (7,3) node[anchor=north west,scale=1.5] {$x_4$};
\draw [fill=black] (5.,3.) circle (1.5pt);
\draw (4.8,2.9) node[anchor=north west,scale=1.5] {$x_5$};
\draw [fill=black] (4.,4.) circle (1.5pt);
\draw (3.2,4.3) node[anchor=north west,scale=1.5] {$x_6$};
\draw [fill=black] (6.,6.) circle (1.5pt);
\draw (5.8,6.6) node[anchor=north west,scale=1.5] {$y_1$};
\draw [fill=black] (10.,4.) circle (1.5pt);
\draw (10,4.3) node[anchor=north west,scale=1.5] {$y_2$};
\draw [fill=black] (8.,2.) circle (1.5pt);
\draw (7.8,2) node[anchor=north west,scale=1.5] {$y_3$};
\draw [fill=black] (5.,2.) circle (1.5pt);
\draw (4.7,2) node[anchor=north west,scale=1.5] {$y_4$};
\end{scriptsize}
\end{tikzpicture}
  
\end{figure}
\begin{figure}[H]
    \centering
    \begin{tikzpicture}[line cap=round,line join=round,x=1.0cm,y=1.0cm,scale=.6]
\clip(2.,1) rectangle (16.,13.);
\draw [line width=.2pt] (6.,10.)-- (10.,10.);
\draw [line width=.2pt] (6.,10.)-- (6.,6.);
\draw [line width=.2pt] (6.,6.)-- (10.,6.);
\draw [line width=.2pt] (10.,10.)-- (10.,6.);
\draw [line width=.2pt] (6.,10.)-- (10.,6.);
\draw [line width=.2pt] (10.,10.)-- (6.,6.);
\draw [line width=.2pt] (6.,10.)-- (4.,8.);
\draw [line width=.2pt] (4.,8.)-- (6.,6.);
\draw [line width=.2pt] (6.,10.)-- (8.,12.);
\draw [line width=.2pt] (8.,12.)-- (10.,10.);
\draw [line width=.2pt] (6.,6.)-- (6.,4.);
\draw [line width=.2pt] (10.,6.)-- (10.,4.);
\draw [line width=.2pt] (10.,6.)-- (12.,6.);
\draw [line width=.2pt] (6.,10.)-- (12.,6.);
\draw [line width=.2pt] (10.,10.)-- (12.,10.);
\begin{scriptsize}
\draw [fill=black] (6.,10.) circle (2.5pt);
\draw (5.1,10.9) node[anchor=north west,scale=1.5] {$y_2$};
\draw [fill=black] (10.,10.) circle (2.5pt);
\draw (9.8,10.9) node[anchor=north west,scale=1.5] {$y_3$};
\draw [fill=black] (6.,6.) circle (2.5pt);
\draw (5.,6) node[anchor=north west,scale=1.5] {$y_1$};
\draw [fill=black] (10.,6.) circle (2.5pt);
\draw (9.9,6) node[anchor=north west,scale=1.5] {$y_4$};
\draw [fill=black] (4.,8.) circle (2.5pt);
\draw (2.7,8.5) node[anchor=north west,scale=1.5] {$x_2$};
\draw [fill=black] (8.,12.) circle (2.5pt);
\draw (7.5,12.8) node[anchor=north west,scale=1.5] {$x_3$};
\draw [fill=black] (6.,4.) circle (2.5pt);
\draw (5.5,4) node[anchor=north west,scale=1.5] {$x_1$};
\draw [fill=black] (10.,4.) circle (2.5pt);
\draw (9.5,4) node[anchor=north west,scale=1.5] {$x_6$};
\draw [fill=black] (12.,6.) circle (2.5pt);
\draw (12,6.4) node[anchor=north west,scale=1.5] {$x_4$};
\draw [fill=black] (12.,10.) circle (2.5pt);
\draw (12,10.4) node[anchor=north west,scale=1.5] {$x_5$};

\end{scriptsize}
\end{tikzpicture}\\
Fig. 9. Two forbidden induced subgraphs for split co-TT graphs are obtained from the graph $G_1$.
  
\end{figure}

\par We have mentioned earlier that, it has been deduced in \cite{24} that every co-TT graph is strongly chordal. Consequently, every co-TT graph is also chordal, and the sun graphs $\mathrm{S}_k (k \geq 3)$ are indeed forbidden induced subgraphs for co-TT graphs. Additionally, it can be verified that if we remove any vertex from $\mathrm{S}_k$ $(k \geq 4)$, then the resulting graph becomes a co-TT graph.\\

Now, we will demonstrate that the graphs  $\mathrm{T_n}$ and $\mathrm{H_n}$ depicted in Fig. 4 are indeed co-TT graphs but the graphs $\mathrm{W}$ and $\mathrm{T}$ are not  co-TT graphs. As a reminder, for an interval $I_v$, $l(v)$ and $r(v)$ respectively represent the left endpoint and right endpoint of $I_v$.\vspace{.3cm} \\
\begin{figure}[H]
    \centering
    \begin{tikzpicture}[line cap=round,line join=round,x=1.0cm,y=1.0cm,scale=1.4]
\clip(0.,0.) rectangle (6.,4.);
\draw [line width=.2pt] (1.,2.)-- (2.,2.);
\draw [line width=.2pt] (2.,2.)-- (3.,2.);
\draw [line width=.2pt] (3.,2.)-- (4.,2.);
\draw [line width=.2pt] (4.,2.)-- (5.,2.);
\draw [line width=.2pt] (1.,2.)-- (3.,1.);
\draw [line width=.2pt] (2.,2.)-- (3.,1.);
\draw [line width=.2pt] (3.,2.)-- (3.,1.);
\draw [line width=.2pt] (4.,2.)-- (3.,1.);
\draw [line width=.2pt] (5.,2.)-- (3.,1.);
\draw [line width=.2pt] (3.,3.)-- (3.,2.);
\draw (2.7,3.5) node[anchor=north west,scale=1.5] {$y$};
\draw (0.7,2.5) node[anchor=north west,scale=1.5] {$x_1$};
\draw (1.8,2.5) node[anchor=north west,scale=1.5] {$x_2$};
\draw (3.,2.5) node[anchor=north west,scale=1.5] {$x_3$};
\draw (3.8,2.5) node[anchor=north west,scale=1.5] {$x_4$};
\draw (4.8,2.5) node[anchor=north west,scale=1.5] {$x_5$};
\draw (2.8,1) node[anchor=north west,scale=1.5] {$x$};
\begin{scriptsize}
\draw [fill=black] (1.,2.) circle (1.5pt);
\draw [fill=black] (2.,2.) circle (1.5pt);
\draw [fill=black] (3.,2.) circle (1.5pt);
\draw [fill=black] (4.,2.) circle (1.5pt);
\draw [fill=black] (5.,2.) circle (1.5pt);
\draw [fill=black] (3.,1.) circle (1.5pt);
\draw [fill=black] (3.,3.) circle (1.5pt);
\end{scriptsize}
\end{tikzpicture}\\
    Fig. 10. A labeling of the graph $W$.
\end{figure}

\begin{lem}\label{l12}
    The graph $W_1$ obtained after removing the edges $xx_1$ and $xx_5$ from $W$ (Fig. 10), is a co-TT graph.
\end{lem}
\begin{figure}[H]
    \centering
    \begin{tikzpicture}[line cap=round,line join=round,x=1.0cm,y=1.0cm,scale=1.7]
\clip(0.,0.) rectangle (6.,4.);
\draw [line width=.2pt] (1.,2.)-- (2.,2.);
\draw [line width=.2pt] (2.,2.)-- (3.,2.);
\draw [line width=.2pt] (3.,2.)-- (4.,2.);
\draw [line width=.2pt] (4.,2.)-- (5.,2.);
\draw [line width=.2pt] (2.,2.)-- (3.,1.);
\draw [line width=.2pt] (3.,2.)-- (3.,1.);
\draw [line width=.2pt] (4.,2.)-- (3.,1.);
\draw [line width=.2pt] (3.,3.)-- (3.,2.);
\draw (2.8,3.5) node[anchor=north west,scale=1.5] {$y$};
\draw (.8,2.4) node[anchor=north west,scale=1.5] {$x_1$};
\draw (1.8,2.4) node[anchor=north west,scale=1.5] {$x_2$};
\draw (3.,2.4) node[anchor=north west,scale=1.5] {$x_3$};
\draw (3.8,2.4) node[anchor=north west,scale=1.5] {$x_4$};
\draw (4.8,2.4) node[anchor=north west,scale=1.5] {$x_5$};
\draw (2.8,1.) node[anchor=north west,scale=1.5] {$x$};
\begin{scriptsize}
\draw [fill=black] (1.,2.) circle (1.5pt);
\draw [fill=black] (2.,2.) circle (1.5pt);
\draw [fill=black] (3.,2.) circle (1.5pt);
\draw [fill=black] (4.,2.) circle (1.5pt);
\draw [fill=black] (5.,2.) circle (1.5pt);
\draw [fill=black] (3.,1.) circle (1.5pt);
\draw [fill=black] (3.,3.) circle (1.5pt);
\end{scriptsize}
\end{tikzpicture}\\
Fig. 11. A labeling of the graph $W_1$.
   
\end{figure}
\begin{proof}
Consider the interval representation of the path $ P: x_1, x_2, x_3, x_4, x_5$. In this representation without loss of generality we assume $l(x_1)< l(x_2)< l(x_3) < l(x_4)< l(x_5)$ and $r(x_1)< r(x_2)<r(x_3)<r(x_4)<r(x_5)$.
Also for two non consecutive vertices $x_r$ and $x_s $( where $r<s), r(x_r)<l(x_s)$. Next, we consider the following possibility. \\
Let $I(x)$ intersects $I(x_2), I(x_3)$ and $I(x_4)$ but not $I(x_1)$ and $I(x_5)$. Now we consider two cases either $l(x_3)<l(x)$ or $r(x)<r(x_3)$. In any case we take a negative interval $I^-(y)$, such that $l(x_3)=r(y)$ and $r(x_3)=l(y)$. This gives a co-TT representation for $W_1$.
\end{proof}
\begin{figure}[H]
    \centering
    \begin{tikzpicture}[line cap=round,line join=round,x=1.0cm,y=1.0cm,scale=1.]
\clip(0.8,0.6) rectangle (13.,6.);
\draw [line width=.2pt] (1.,4.)-- (4.,4.);
\draw [line width=.2pt] (3.,3.)-- (6.,3.);
\draw [line width=.2pt] (5.,4.)-- (8.,4.);
\draw [line width=.2pt] (7.,3.)-- (10.,3.);
\draw [line width=.2pt] (9.,4.)-- (12.,4.);
\draw [line width=.2pt] (5.5,2)-- (7.5,2);
\draw [->,line width=1.5pt] (8.,5.) -- (6.5,5);
\draw [line width=1.5pt] (6.5,5)-- (5.,5.);

\draw (2,4.7) node[anchor=north west,scale=1.5] {$x_1$};
\draw (4.2,3.7) node[anchor=north west,scale=1.5] {$x_2$};
\draw (6.2,4.7) node[anchor=north west,scale=1.5] {$x_3$};
\draw (8.2,3.7) node[anchor=north west,scale=1.5] {$x_4$};
\draw (10.2,4.7) node[anchor=north west,scale=1.5] {$x_5$};
\draw (6.2,2.7) node[anchor=north west,scale=1.5] {$x$};
\draw (6.3,5.8) node[anchor=north west,scale=1.5] {$y$};
\end{tikzpicture}\\
Fig. 12. A co-TT representation of $W_1$.
\end{figure}

\begin{obs}\label{obs12}
\end{obs}
   Based on the proof presented above, it is evident that the graphs $W_1 + xx_5$ and $W_1 + xx_1$ are also co-TT graphs. 

\begin{lem}\label{l13}
    The graph $W$ is not a co-TT graph.
\end{lem}
\begin{proof}
    Based on the proof provided in \cref{l12}, it becomes clear that if $I(x)$ intersects all the intervals ${I(x_1), I(x_2), I(x_3), I(x_4)}$, and $I(x_5)$, then it is not possible to have a negative interval $I^{-}(y)$ for the vertex $y$ such that $I^-(y)$ contained in $I(x_3)$ only. Consequently, $W$ is not a co-TT graph because it lacks a signed-interval representation. 
\end{proof}
\vspace{1cm}

\begin{lem}
    The graph $T$ is not a co-TT graph.
\end{lem}
 \begin{figure}[H]
    \centering
    \begin{tikzpicture}[line cap=round,line join=round,x=1.0cm,y=1.0cm,scale=1]
\clip(1.,1.) rectangle (11.,7.);
\draw [line width=.2pt] (2.,2.)-- (4.,2.);
\draw [line width=.2pt] (4.,2.)-- (6.,2.);
\draw [line width=.2pt] (6.,2.)-- (8.,2.);
\draw [line width=.2pt] (8.,2.)-- (10.,2.);
\draw [line width=.2pt] (6.,2.)-- (6.,4.);
\draw [line width=.2pt] (6.,4.)-- (6.,6.);
\begin{scriptsize}
\draw [fill=black] (2.,2.) circle (2pt);
\draw [fill=black] (4.,2.) circle (2pt);
\draw [fill=black] (6.,2.) circle (2pt);
\draw [fill=black] (8.,2.) circle (2pt);
\draw [fill=black] (10.,2.) circle (2pt);
\draw [fill=black] (6.,4.) circle (2pt);
\draw [fill=black] (6.,6.) circle (2pt);
\draw (1.7,2.7) node[anchor=north west,scale=1.5] {$x_1$};
\draw (3.7,2.7) node[anchor=north west,scale=1.5] {$x_2$};
\draw (6,2.7) node[anchor=north west,scale=1.5] {$x_3$};
\draw (7.7,2.7) node[anchor=north west,scale=1.5] {$x_4$};
\draw (9.7,2.7) node[anchor=north west,scale=1.5] {$x_5$};
\draw (6.,4.5) node[anchor=north west,scale=1.5] {$x_6$};
\draw (6,6.6) node[anchor=north west,scale=1.5] {$x_7$};
\draw (5.6,1.8) node[anchor=north west,scale=2] {$T$};
\end{scriptsize}
\end{tikzpicture}\\
   Fig. 13. A Labeling of the vertices of $T$.
\end{figure}
\begin{proof}
    We shall show that the graph $T$ has no signed-interval representation. First, we consider an interval representation of the path: $x_1, x_2, x_3, x_4, x_5$. In this representation as in Lemma 13, we assume  $l(x_1)< l(x_2)< l(x_3) < l(x_4)< l(x_5)$ and $r(x_1)< r(x_2)<r(x_3)<r(x_4)<r(x_5)$. Also for any two non-consecutive vertices $x_i$ and $x_j (i<j)$ of the path, $r(x_i)< l(x_j)$. Next, we can consider an interval $I(x_6)$ for the vertex $x_6$ such that $r(x_2)< l(x_6)$ and $r(x_6)<l(x_4)$ as $x_6$ is adjacent to $x_3$ only. Now, there exists no positive or negative interval for $x_7$, so that the vertex $x_7$ is adjacent to $x_6$ only. Thus $\mathrm{T}$ has no signed-interval representation. Hence $\mathrm{T}$ is not a co-TT graph. From the symmetry of $T$, the proof is complete.
\end{proof}
\begin{lem}\label{lem14}
    The graphs $T_n$ $(n\geq 1)$ are co-TT graphs.
\end{lem}

\begin{figure}[H]
    \centering
   \begin{tikzpicture}[line cap=round,line join=round,x=1.0cm,y=1.0cm,scale=1.7]
\clip(-1.,0.7) rectangle (7.,6.);
\draw [line width=.2pt] (3.,4.)-- (3.,5.);
\draw [line width=.2pt] (3.,4.)-- (1.,2.);
\draw [line width=.2pt] (1.,2.)-- (0.,2.);
\draw [line width=.2pt] (3.,4.)-- (2.,2.);
\draw [line width=.2pt] (3.,4.)-- (3.,2.);
\draw [line width=.2pt] (1.,2.)-- (2.,2.);
\draw [line width=.2pt] (2.,2.)-- (3.,2.);
\draw [line width=.2pt] (3.,4.)-- (5.,2.);
\draw [line width=.2pt] (5.,2.)-- (6.,2.);
\draw [line width=.2pt] (3.,4.)-- (3.5,2.);
\draw [line width=.2pt] (3.,2.)-- (3.5,2.);
\draw [line width=.2pt] (3.,4.)-- (4.5,2.);
\draw [line width=.2pt] (4.5,2.)-- (5.,2.);
\begin{scriptsize}
\draw [fill=black] (3.,4.) circle (1.5pt);
\draw [fill=black] (3.,5.) circle (1.5pt);
\draw [fill=black] (1.,2.) circle (1.5pt);
\draw [fill=black] (0.,2.) circle (1.5pt);
\draw [fill=black] (2.,2.) circle (1.5pt);
\draw [fill=black] (3.,2.) circle (1.5pt);
\draw [fill=black] (5.,2.) circle (1.5pt);
\draw [fill=black] (6.,2.) circle (1.5pt);
\draw [fill=black] (3.5,2.) circle (1.5pt);
\draw [fill=black] (4.5,2.) circle (1.5pt);
\draw [fill=black] (3.8,2.0) circle (.5pt);
\draw [fill=black] (4.,2.) circle (.5pt);
\draw [fill=black] (4.2,2.0) circle (.5pt);
\draw (-.2,1.9) node[anchor=north west,scale=2] {$v$};
\draw (.8,1.9) node[anchor=north west,scale=1.5] {$x_0$};
\draw (1.6,1.9) node[anchor=north west,scale=1.5] {$x_1$};
\draw (2.5,1.9) node[anchor=north west,scale=1.5] {$x_2$};
\draw (3.2,1.9) node[anchor=north west,scale=1.5] {$x_3$};
\draw (3.9,1.9) node[anchor=north west,scale=1.3] {$x_{n-1}$};
\draw (5.,1.9) node[anchor=north west,scale=1.5] {$x_n$};
\draw (6,1.9) node[anchor=north west,scale=1.8] {$w$};
\draw (2.5,4.5) node[anchor=north west,scale=1.8] {$x$};
\draw (2.6,5.5) node[anchor=north west,scale=2] {$u$};

\end{scriptsize}
\end{tikzpicture}\\
Fig. 14. A labelling of $T_n$.
\end{figure}

\begin{proof}
    As before, we consider the interval representation of the path:\\
    $v, x_0, x_1, x_2, x_3,..., x_{n-1}, x_n, w$. In this representation we may assume $l(v)< l(x_o)< l(x_1)< l(x_2)< ...< l(x_{n-1})< l(x_n)< l(w)$, and $r(v)< r(x_0)< r(x_1)< r(x_2)< ...< r(x_{n-1})< r(x_n)< r(w)$. Also $r(v)< l(x_1)$, $r(x_{n-1})< l(w)$. And for two non-consecutive vertices $x_i $ and $x_j$ with $i<j$, we have $r(x_i)< l(x_j)$. Since the vertex $x$ is adjacent to all $x_i (i=0,1,2,...,n)$, we take $I(x)$ such that $r(v)< l(x)< l(x_1)$ and $r(x_{n-1})< r(x)< l(w)$. Now we take negative interval $I^-(u)$ in the following way. Let $x_i$ and $x_{i+1}$ be two intervals ($1\leq i\leq {n-1}$). Also $l(x_i)< r(u)< l(x_{i+1})$ and $r(x_i)< l(u)< r(x_{i+1})$. Then $I^-(u)$ is contained only in $I(x)$. This gives a signed-interval representation i.e. a co-TT representation of $T_n$. Hence $T_n$ are  co-TT graphs.
\end{proof}
\begin{figure}[H]
    \centering
  \begin{tikzpicture}[line cap=round,line join=round,x=1.0cm,y=1.0cm,scale=.8]

\clip(0.,0.) rectangle (20.,5.);
\draw [line width=.2pt] (1.,2.)-- (4.,2.);
\draw (2.,2.9) node[anchor=north west,scale=1.8] {$v$};
\draw [line width=.2pt] (3.,1.)-- (6.,1.);
\draw (3.8,1.8) node[anchor=north west,scale=1.3] {$x_o$};
\draw [line width=.2pt] (5.,2.)-- (8.,2.);
\draw (5.7,2.8) node[anchor=north west,scale=1.3] {$x_1$};
\draw [line width=.2pt] (7.,1.)-- (10.,1.);
\draw (8.2,1.8) node[anchor=north west,scale=1.3] {$x_2$};
\draw [line width=.2pt] (9.,2.)-- (12.,2.);
\draw (10.2,2.8) node[anchor=north west,scale=1.3] {$x_3$};
\draw [line width=.2pt] (4.606236334088112,3.017224100096699)-- (16.,3.);
\draw (10,3.9) node[anchor=north west,scale=1.5] {$x$};
\draw [line width=.2pt] (14.,1.)-- (15.5,1.);
\draw (14,1.8) node[anchor=north west,scale=1.2] {$x_{n-1}$};
\draw [line width=.2pt] (15.,2.)-- (18.,2.);
\draw (16,2.8) node[anchor=north west,scale=1.3] {$x_n$};
\draw [line width=.2pt] (17.5,1)-- (19.,1.);
\draw (17.9,1.8) node[anchor=north west,scale=1.5] {$w$};
\draw [->,line width=1.5pt] (11.,4.0) -- (10.,4.);
\draw [line width=1.5pt] (10.,4.)-- (8.5,4);
\draw (9,4.8) node[anchor=north west,scale=1.5] {$u$};
\draw [line width=1pt,dotted] (12.2,1.5)-- (13.8,1.5);
\end{tikzpicture}
   Fig. 15. A co-TT representation of $T_n$.
\end{figure}

\begin{lem}\label{lem15}
    The graph $T_0$ is not a co-TT graph.
\end{lem}

\begin{figure}[H]
    \centering
    \begin{tikzpicture}[line cap=round,line join=round,x=1.0cm,y=1.0cm,scale=1.5]
\clip(0.,0.) rectangle (8.,5.);
\draw [line width=.2pt] (1.,1.)-- (2.,1.);
\draw [line width=.2pt] (2.,1.)-- (3.,1.);
\draw [line width=.2pt] (3.,1.)-- (4.,2.);
\draw [line width=.2pt] (4.,2.)-- (5.,1.);
\draw [line width=.2pt] (3.,1.)-- (5.,1.);
\draw [line width=.2pt] (5.,1.)-- (6.,1.);
\draw [line width=.2pt] (6.,1.)-- (7.,1.);
\draw [line width=.2pt] (4.,2.)-- (4.,3.);
\draw [line width=.2pt] (4.,3.)-- (4.,4.);
\begin{scriptsize}
\draw [fill=black] (1.,1.) circle (1.5pt);
\draw (.8,1.6) node[anchor=north west,scale=1.8] {$v'$};
\draw [fill=black] (2.,1.) circle (1.5pt);
\draw (1.8,1.5) node[anchor=north west,scale=1.8] {$v$};
\draw [fill=black] (3.,1.) circle (1.5pt);
\draw (2.5,1.5) node[anchor=north west,scale=1.8] {$x_0$};
\draw [fill=black] (4.,2.) circle (1.5pt);
\draw (3.47,2.3) node[anchor=north west,scale=2] {$x$};
\draw [fill=black] (6.,1.) circle (1.5pt);
\draw (4.8,1.5) node[anchor=north west,scale=1.8] {$x_1$};
\draw [fill=black] (7.,1.) circle (1.5pt);
\draw (6.8,1.6) node[anchor=north west,scale=1.8] {$w'$};
\draw [fill=black] (4.,3.) circle (1.5pt);
\draw (3.5,3.2) node[anchor=north west,scale=1.8] {$u$};
\draw [fill=black] (4.,4.) circle (1.5pt);
\draw (3.5,4.4) node[anchor=north west,scale=1.8] {$u'$};
\draw[fill=black] (5.,1.)  circle (1.5pt);
\draw (5.7,1.5) node[anchor=north west,scale=1.8] {$w$};
\end{scriptsize}
\end{tikzpicture}\\
Fig. 16. The graph $T_0$.
\end{figure}

\begin{proof}
    Obviously the graph $T_0$ contains the graph $T_1$( when $n=1$ in the family $T_n$) as an induced subgraph. The co-TT representation of $T_1$ can be found from Fig. 15. Which is given in the Fig. 17. Also from the symmetry of $T_1$ we can infer that this representation is unique up to suitable modification.
    \begin{figure}[H]
        \centering
        \begin{tikzpicture}[line cap=round,line join=round,x=1.0cm,y=1.0cm]
\clip(0.,1.) rectangle (10.,6.);
\draw [line width=.2pt] (0.,3.)-- (3.,3.);
\draw (1.2,3.6) node[anchor=north west,scale=1.5] {$v$};
\draw [line width=.2pt] (1.,2.)-- (5.,2.);
\draw (2.6,2.65) node[anchor=north west,scale=1.5] {$x_0$};
\draw [line width=.2pt] (4.,3.)-- (8.,3.);
\draw (5.2,3.65) node[anchor=north west,scale=1.5] {$x_1$};
\draw [line width=.2pt] (7.,2.)-- (10.,2.);
\draw (7.9,2.6) node[anchor=north west,scale=1.5] {$w$};
\draw [line width=.2pt] (3.5,4.0)-- (6.5,4.0);
\draw (4.5,4.6) node[anchor=north west,scale=1.5] {$x$};
\draw [->,line width=1.5pt] (6.5,5.0) -- (4.4,5.0);
\draw [line width=1.5pt] (4.4,5)-- (3.5,5);
\draw (4.5,5.67) node[anchor=north west,scale=1.5] {$u$};
\end{tikzpicture}\\
Fig. 17. A co-TT representation of $T_1$.
    \end{figure}
    From the Fig. 17 it is clear that we have positive intervals $I(v')$ and $I(w')$ respectively for $v'$ and $w'$, where $I(v')$ intersects $I(v)$ only and $I(w')$ intersects $I(w)$ only. But can't have any positive or negative interval for the vertex $u'$ so that $u'$ is adjacent to $u$ only. So there exists no co-TT representation of $T_0$ and accordingly $T_0$ is not a co-TT graph.
\end{proof}
\begin{obs}\label{obs17}
\end{obs}
    In the graph $T_n (n > 1)$, if we include the edge $uu′$, it will contain $T$ as an induced subgraph.

\begin{lem}
   The graphs $H_n (n\geq 2)$ are co-TT graphs. 
\end{lem}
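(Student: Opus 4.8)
The plan is to exhibit an explicit signed--interval representation of $H_n$; since signed--interval graphs coincide with co--TT graphs, this is enough. First I would record the structure of $H_n$ as drawn in Fig.~4: it consists of an induced path $P\colon p_0,p_1,\dots ,p_m$ with ends $p_0=F$ and $p_m=G$, together with two mutually adjacent vertices $B$ and $C$ and a vertex $A$, where $B$ is adjacent to $p_0,p_1,\dots ,p_{m-1}$ (every path vertex except $G$), $C$ is adjacent to $p_1,\dots ,p_m$ (every path vertex except $F$), and $A$ is adjacent only to $B$ and $C$. The hypothesis $n\ge 2$ guarantees $m\ge 3$, which is all the room the construction needs.

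Next I would start from an ordinary interval representation of $P$, normalised so that $l(p_0)<l(p_1)<\dots <l(p_m)$ and $r(p_0)<r(p_1)<\dots <r(p_m)$, with consecutive intervals meeting ($l(p_{i+1})\le r(p_i)$) and non--consecutive ones disjoint ($r(p_i)<l(p_j)$ whenever $j\ge i+2$). To this I would attach the three missing vertices: give $B$ the positive interval $I_B^{+}=[\,l(p_0),c_B\,]$, where $c_B$ satisfies $r(p_1)<c_B<l(p_m)$ and $c_B\ge l(p_{m-1})$ (such a $c_B$ exists since $m\ge 3$); give $C$ the positive interval $I_C^{+}=[\,c_C,r(p_m)\,]$, where $c_C$ satisfies $r(p_0)<c_C<l(p_2)$ (possible since $p_0$ and $p_2$ are non--consecutive on $P$); and give $A$ the negative interval $I_A^{-}$ whose underlying point set is $[\,c_C,c_B\,]$, noting $c_C<l(p_2)\le l(p_{m-1})\le c_B$ so this really is an interval.

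Then I would verify that this assignment realises exactly the edges of $H_n$. From the increasing--endpoint normalisation one gets that $I_B^{+}$ meets $I_{p_i}^{+}$ for $0\le i\le m-1$ and misses $I_{p_m}^{+}$, and symmetrically that $I_C^{+}$ meets $I_{p_i}^{+}$ for $1\le i\le m$ and misses $I_{p_0}^{+}$; the identity $I_B^{+}\cap I_C^{+}=[\,c_C,c_B\,]\ne\emptyset$ gives $B\sim C$; and $I_A^{-}=[\,c_C,c_B\,]\subseteq[\,l(p_0),c_B\,]=I_B^{+}$ together with $I_A^{-}\subseteq[\,c_C,r(p_m)\,]=I_C^{+}$ give $A\sim B$ and $A\sim C$. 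No edge of $P$ is lost or gained, since adjoining $A,B,C$ affects only pairs incident to them.

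The step I expect to be the crux is checking that $A$ is adjacent to \emph{nothing} beyond $B$ and $C$, i.e.\ that $I_A^{-}=[\,c_C,c_B\,]\not\subseteq I_{p_i}^{+}$ for every path vertex $p_i$. The point is that $[\,c_C,c_B\,]\subseteq[\,l(p_i),r(p_i)\,]$ would force $l(p_i)\le c_C<l(p_2)$, hence $i\in\{0,1\}$, but then one would also need $c_B\le r(p_i)\le r(p_1)$, contradicting $c_B>r(p_1)$. Informally: the negative interval carrying $A$ is deliberately made too long to be swallowed by any single short path interval, yet short enough to fit inside the two long intervals of $B$ and $C$ --- the same mechanism used for $W_1$ and $T_n$ above, now with two long intervals rather than one. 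This yields a signed--interval representation of $H_n$, so $H_n$ is a co--TT graph; the very smallest admissible case may, if preferred, be confirmed separately by direct inspection.
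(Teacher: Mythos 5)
Your proposal is correct and follows essentially the same route as the paper: build an interval representation of the path, give $y$ and $z$ (your $B$, $C$) long positive intervals covering the appropriate ends, and give $u$ (your $A$) a negative interval contained in $I(y)$ and $I(z)$ but in no path interval. The only cosmetic difference is the placement of the negative interval --- you make it long (spanning $[c_C,c_B]$) so it cannot fit inside any single path interval, whereas the paper tucks a short negative interval across the overlap of two consecutive path intervals; the mechanism and the verification are the same.
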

\begin{figure}[H]
    \centering
    \begin{tikzpicture}[line cap=round,line join=round,x=1.0cm,y=1.0cm,scale=1.7]
\clip(0.,1.) rectangle (6.,6.);
\draw [line width=.2pt] (2.,4.)-- (4.,4.);
\draw [line width=.2pt] (2.,4.)-- (3.,5.);
\draw [line width=.2pt] (3.,5.)-- (4.,4.);
\draw [line width=.2pt] (2.,4.)-- (1.,2.);
\draw [line width=.2pt] (4.,4.)-- (5.,2.);
\draw [line width=.2pt] (4.,4.)-- (4.,2.);
\draw [line width=.2pt] (4.,2.)-- (5.,2.);
\draw [line width=.2pt] (2.,4.)-- (5.,2.);
\draw [line width=.2pt] (2.,4.)-- (4.,2.);
\draw [line width=.2pt] (1.,2.)-- (2.,2.);
\draw [line width=.2pt] (2.,4.)-- (2.,2.);
\draw [line width=.2pt] (2.,4.)-- (2.7,2);
\draw [line width=.2pt] (2.,2.)-- (2.7,2);
\draw [line width=.2pt] (1.,2.)-- (4.,4.);
\draw [line width=.2pt] (2.,2.)-- (4.,4.);
\draw [line width=.2pt] (2.7,2)-- (4.,4.);
\draw [line width=.2pt] (0.5,3.5)-- (2.,4.);
\draw [line width=.2pt] (0.5,3.5)-- (1.,2.);
\draw [line width=.2pt] (4.,4.)-- (5.5,3.5);
\draw [line width=.2pt] (5.5,3.5)-- (5.,2.);
\draw[dotted, line width=1pt] (2.7,2)--(4,2);

\begin{scriptsize}
\draw [fill=black] (2.,4.) circle (1.2pt);
\draw (1.7,4.3) node[anchor=north west,scale=1.5] {$y$};
\draw [fill=black] (4.,4.) circle (1.5pt);
\draw (4.,4.3) node[anchor=north west,scale=1.5] {$z$};
\draw [fill=black] (3.,5.) circle (1.5pt);
\draw (2.8,5.4) node[anchor=north west,scale=1.5] {$u$};
\draw [fill=black] (1.,2.) circle (1.5pt);
\draw (.67,2.) node[anchor=north west,scale=1.5] {$x_1$};
\draw [fill=black] (5.,2.) circle (1.5pt);
\draw (4.8,2) node[anchor=north west,scale=1.5] {$x_n$};
\draw [fill=black] (4.,2.) circle (1.5pt);
\draw (3.8,2) node[anchor=north west,scale=1.5] {$x_{n-1}$};
\draw [fill=black] (2.,2.) circle (1.5pt);
\draw (1.8,2) node[anchor=north west,scale=1.5] {$x_2$};
\draw [fill=black] (2.7,2) circle (1.5pt);
\draw (2.5,2) node[anchor=north west,scale=1.5] {$x_3$};
\draw [fill=black] (0.5,3.5) circle (1.5pt);
\draw (0.1,3.7) node[anchor=north west,scale=1.5] {$w$};
\draw [fill=black] (5.5,3.5) circle (1.5pt);
\draw (5.5,3.7) node[anchor=north west,scale=1.5] {$v$};
\end{scriptsize}
\end{tikzpicture}\\
Fig. 18. A labelling of the graph $H_n$.
   
\end{figure}
\begin{proof}
    Consider the interval representation of the path $w, x_1, x_2, ..., x_{n-1}, x_n, v$. As before, we assume $l(w)< l(x_1)< l(x_2)< ...< l(x_{n-1})< l(x_n)< l(x_v)$ and $r(w)< r(x_1)< r(x_2)< ...< r(x_{n-1})< r(x_n)< r(x_v)$. Also, $r(w)< l(x_2)$, $r(x_{n-1})< l(v)$ and for $x_i$ and $x_{i+2}$ we have $ r(x_i)< l(x_{i+2}) $ $(1\leq i\leq n-2)$. Now, $y$ is adjacent to all the vertices of this path except $v$ and $z$ is adjacent to all the vertices of this path except $w$. So we take $I(y)$ such that $l(w)< l(x_1)< l(y)$ and $l(x_n)<r(y)<l(v)$. Similarly, we take $I(z)$ such that $r(w)< l(z)< l(x_2)$ and $r(z)< r(x_n)< r(v)$. Again $u$ is adjacent to $y$ and $z$ only. Also let $x_i$ and $x_{i+1}$ are two  consecutive vertices $(1< i< n-1)$ then we take a negative interval $I^-(u)$ such that $l(x_i)< r(u)< l(x_{i+1})$ and $r(x_{i})< l(u)< r(x_{i+1})$. Then $I^-(u)$ is contained in both $I(y)$ and $I(z)$ also contained in no other interval. Thus we have a co-TT representation of $H_n (n\geq 2)$.
\end{proof}
\begin{figure}[H]
    \centering
    \begin{tikzpicture}[line cap=round,line join=round,x=1.0cm,y=1.0cm,scale=.8]
\clip(-.1,-.5) rectangle (21.,6.5);
\draw [line width=.2pt] (0.,2.)-- (2.,2.);
\draw (.6,2.8) node[anchor=north west,scale=1.5] {$w$};
\draw [line width=.2pt] (1.,1.)-- (4.,1.);
\draw (1.8,1.8) node[anchor=north west,scale=1.5] {$x_1$};
\draw [line width=.2pt] (3.,2.)-- (6.,2.);
\draw (3.8,2.8) node[anchor=north west,scale=1.5] {$x_2$};
\draw [line width=.2pt] (5.,1.)-- (8.,1.);
\draw (5.8,1.8) node[anchor=north west,scale=1.5] {$x_3$};
\draw [line width=.2pt] (7.,2.)-- (10.,2.);
\draw (7.8,2.8) node[anchor=north west,scale=1.5] {$x_4$};
\draw [line width=.2pt] (9.,1.)-- (12.,1.);
\draw (9.8,1.8) node[anchor=north west,scale=1.5] {$x_5$};
\draw [line width=.2pt] (14.,2.)-- (17.,2.);
\draw (14.8,2.9) node[anchor=north west,scale=1.5] {$x_{n-1}$};
\draw [line width=.2pt] (16.,1.)-- (19.,1.);
\draw (16.8,1.8) node[anchor=north west,scale=1.5] {$x_n$};
\draw [line width=.2pt] (18.,2.)-- (20.,2.);
\draw (18.8,2.8) node[anchor=north west,scale=1.5] {$v$};
\draw [line width=.2pt] (1.5,3.)-- (17.,3.);
\draw (7.8,3.9) node[anchor=north west,scale=1.5] {$y$};
\draw [line width=.2pt] (18.5,4.)-- (3.,4.);
\draw (11.8,4.9) node[anchor=north west,scale=1.7] {$z$};
\draw [->,line width=1pt] (11.,5.) -- (10.,5.);
\draw [line width=1.pt] (10.,5.)-- (9.,5.);
\draw (9.8,5.8) node[anchor=north west,scale=1.5] {$u$};
\draw [line width=1pt,dotted] (12.2,1.5)-- (13.8,1.5);

\end{tikzpicture}\\
Fig. 19. A co-TT representation of $H_n$.
\end{figure}

\begin{obs}\label{obs19}
\end{obs}
    The graph $H_1$ coincide with the graph $\mathrm{S_3}$ (the 3-sun). Also $H^*_n (n\geq 2)$, the graphs obtained from $H_n$ by adding a new vertex $u'$ and joining it with $u$ are forbidden induced subgraphs for co-TT graphs, which we shall state in the next lemma.

\begin{lem}
    The graphs $H^*_n (n\geq 2)$ are forbidden induced subgraphs of co-TT graphs.
\end{lem}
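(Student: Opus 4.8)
The plan is to show directly that $H^*_n$ admits no signed-interval representation; since signed-interval graphs coincide with co-TT graphs, this exhibits $H^*_n$ as a forbidden induced subgraph. Assume for contradiction that a signed-interval representation of $H^*_n$ exists, and consider the intervals it assigns to the vertices of the induced subgraph $H_n$ (momentarily ignoring $u'$). I will use throughout the elementary fact that in a signed-interval representation no two vertices carrying negative intervals can be adjacent --- the adjacency rule ($I_a^+\cap I_b^+\neq\emptyset$, or $I_a^-\subseteq I_b^+$, or $I_b^-\subseteq I_a^+$) never links two negative intervals --- so the vertices receiving negative intervals form an independent set.

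First I would fix the signs. If $y$ carried a negative interval, then all its neighbours $w,x_1,\dots,x_n$ would carry positive intervals and $I_y^-$ would lie in both $I_w^+$ and $I_{x_2}^+$; but $w\not\sim x_2$ forces $I_w^+\cap I_{x_2}^+=\emptyset$, a contradiction, so $y$ is positive, and symmetrically (via $v\not\sim x_{n-1}$) so is $z$. The same device shows each $x_i$ is positive: for $2\le i\le n-1$ a negative $I_{x_i}^-$ would lie in $I_{x_{i-1}}^+\cap I_{x_{i+1}}^+=\emptyset$, for $x_1$ one uses the pair $w,x_2$, and for $x_n$ the pair $x_{n-1},v$. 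All of this needs only $n\ge 2$.

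The heart of the argument --- and the step I expect to be the main obstacle --- is to show that $u$ is \emph{forced} to carry a negative interval (the representation built in the proof of Lemma~17 does so, but now it has to be shown unavoidable). Arrange $I_{x_1}^+,\dots,I_{x_n}^+$ along the line; since the $x_i$ induce a path, consecutive blocks overlap, non-consecutive ones are disjoint, $L:=\bigcup_i I_{x_i}^+$ is a single interval, and its two extreme blocks are $I_{x_1}^+$ and $I_{x_n}^+$. Since $y$ and $z$ are adjacent to every $x_i$, both $I_y^+$ and $I_z^+$ meet the two extreme blocks and therefore contain the whole ``core'' $[\,r(x_1),l(x_n)\,]\subseteq L$ (choosing the orientation with $I_{x_1}^+$ leftmost). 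Suppose $u$ were positive; then $I_u^+$ must contain a point $p\in I_y^+$ and a point $q\in I_z^+$ yet meet no $I_{x_i}^+$. If $p$ or $q$ lies in the core, $I_u^+$ already meets $L$; if $p$ and $q$ lie on opposite sides of the core, $[\,p,q\,]\subseteq I_u^+$ contains the core; if both lie to the left of the core, then $z\not\sim w$ (together with $z\sim x_n$ and $w\not\sim x_2$, which pin down where $I_w$ sits) forces $l(z)>l(x_1)$, whence $q\in(l(x_1),r(x_1))\subseteq I_{x_1}^+$; symmetrically if both lie to the right, $y\not\sim v$ forces $p\in I_{x_n}^+$. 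In every case $I_u^+$ meets some $I_{x_i}^+$, contradicting $u\not\sim x_i$. (When $n=2$ the blocks $I_{x_1}^+,I_{x_2}^+$ overlap, so the ``core'' must be replaced by the sub-block of $I_{x_1}^+$ that $y\sim w,\ y\sim x_2,\ w\not\sim x_2$ forces into $I_y^+$ and the mirror sub-block of $I_{x_2}^+$ in $I_z^+$; the rest is unchanged.) Hence $u$ carries a negative interval, and from $u\sim y$ with $u$ negative and $y$ positive we get $I_u^-\subseteq I_y^+$.

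Finally I would bring $u'$ back. Its only neighbour is $u$; since $u$ is negative and negative intervals are pairwise non-adjacent, $u'$ must be positive, and $u\sim u'$ forces $I_u^-\subseteq I_{u'}^+$. But $I_u^-$ is a genuine non-degenerate interval contained in both $I_{u'}^+$ and $I_y^+$, so $I_{u'}^+\cap I_y^+\neq\emptyset$; both being positive intervals, this means $u'\sim y$, contradicting that $u'$ is adjacent only to $u$. Therefore $H^*_n$ has no signed-interval representation, so it is not a co-TT graph, i.e. it is a forbidden induced subgraph for the class of co-TT graphs. The only delicate point in the whole proof is the rigidity claim of the third paragraph that $u$ must be negative; everything else is routine bookkeeping with interval endpoints.
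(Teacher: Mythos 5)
Your proof is correct and follows the route the paper itself intends: the paper's justification of this lemma is a one-line appeal to the construction in Lemma 17, tacitly assuming that in \emph{every} signed-interval representation of $H_n$ the vertex $u$ must receive a negative interval, after which the pendant $u'$ cannot be represented; you supply exactly this rigidity argument (negative vertices form an independent set, hence $y,z,x_1,\dots,x_n$ are positive; a positive $I_u^+$ meeting $I_y^+$ and $I_z^+$ would be forced to meet some $I_{x_i}^+$; and once $u$ is negative, $I_u^-\subseteq I_y^+$ together with $I_u^-\subseteq I_{u'}^+$ forces the non-edge $u'y$), which the paper leaves implicit. The only thin spot is your $n=2$ adjustment, which is sketched rather than carried out; the claim is true there and can be closed uniformly, e.g.\ if $u$ were positive and, say, entirely to the left of $I_{x_1}^+\cup I_{x_2}^+$, then handling $w$ (positive or negative) forces $r(y)>r(z)$ while handling $v$ forces $r(z)>r(y)$, a contradiction, so no separate ``core'' construction is really needed.
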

\begin{proof}
    The proof follows directly from the proof of lemma 18.
\end{proof}

\begin{figure}[H]
    \centering
    \begin{tikzpicture}[line cap=round,line join=round,x=1.0cm,y=1.0cm,scale=1.7]
\clip(0.,1.) rectangle (6.,7.);
\draw [line width=.2pt] (2.,4.)-- (4.,4.);
\draw [line width=.2pt] (2.,4.)-- (3.,5.);
\draw [line width=.2pt] (3.,5.)-- (4.,4.);
\draw [line width=.2pt] (2.,4.)-- (1.,2.);
\draw [line width=.2pt] (4.,4.)-- (5.,2.);
\draw [line width=.2pt] (4.,4.)-- (4.,2.);
\draw [line width=.2pt] (4.,2.)-- (5.,2.);
\draw [line width=.2pt] (2.,4.)-- (5.,2.);
\draw [line width=.2pt] (2.,4.)-- (4.,2.);
\draw [line width=.2pt] (1.,2.)-- (2.,2.);
\draw [line width=.2pt] (2.,4.)-- (2.,2.);
\draw [line width=.2pt] (2.,4.)-- (2.7,2);
\draw [line width=.2pt] (2.,2.)-- (2.7,2);
\draw [line width=.2pt] (1.,2.)-- (4.,4.);
\draw [line width=.2pt] (2.,2.)-- (4.,4.);
\draw [line width=.2pt] (2.7,2)-- (4.,4.);
\draw [line width=.2pt] (0.5,3.5)-- (2.,4.);
\draw [line width=.2pt] (0.5,3.5)-- (1.,2.);
\draw [line width=.2pt] (4.,4.)-- (5.5,3.5);
\draw [line width=.2pt] (5.5,3.5)-- (5.,2.);
\draw[dotted, line width=1pt] (2.7,2)--(4,2);
\begin{scriptsize}
\draw [fill=black] (2.,4.) circle (1.5pt);
\draw (1.7,4.3) node[anchor=north west,scale=1.5] {$y$};
\draw [fill=black] (4.,4.) circle (1.5pt);
\draw (4.,4.3) node[anchor=north west,scale=1.5] {$z$};
\draw [fill=black] (3.,5.) circle (1.5pt);
\draw (3,5.2) node[anchor=north west,scale=1.5] {$u$};
\draw [fill=black] (1.,2.) circle (1.5pt);
\draw (.67,2.) node[anchor=north west,scale=1.5] {$x_1$};
\draw [fill=black] (5.,2.) circle (1.5pt);
\draw (4.8,2) node[anchor=north west,scale=1.5] {$x_n$};
\draw [fill=black] (4.,2.) circle (1.5pt);
\draw (3.8,2) node[anchor=north west,scale=1.5] {$x_{n-1}$};
\draw [fill=black] (2.,2.) circle (1.5pt);
\draw (1.8,2) node[anchor=north west,scale=1.5] {$x_2$};
\draw [fill=black] (2.7,2) circle (1.5pt);
\draw (2.5,2) node[anchor=north west,scale=1.5] {$x_3$};
\draw [fill=black] (0.5,3.5) circle (1.5pt);
\draw (0.1,3.7) node[anchor=north west,scale=1.5] {$w$};
\draw [fill=black] (5.5,3.5) circle (1.5pt);
\draw (5.5,3.7) node[anchor=north west,scale=1.5] {$v$};
\draw[line width=.2pt] (3,5)--(3,6);
\draw[fill=black] (3,6) circle (1.5pt);
\draw (3,6.2) node[anchor=north west,scale=1.5] {$u'$};

\end{scriptsize}
\end{tikzpicture}\\
Fig. 20. The graph $H^*_n (n\geq 2)$.
   
\end{figure}\vspace{1cm}
\begin{obs}\label{obs21}
\end{obs}
    Again, we consider the graphs $H_n$, where $n\leq 3$. First, we consider the graph $H_3$. If we take a vertex $x'$, and join it with the vertex $x_1$ or the vertex $x_3$ then in any case from Fig. 19 we can conclude that it will be a co-TT graph by taking a negative interval $I^-(x')$, where $r(x')$= $l(x_1)$ and $l(x')$= $r(x_1)$ or $r(x')$= $l(x_3)$ and $l(x')$= $r(x_3)$. But, if we join $x'$ with $x_2$ then the resulting graph (Fig. 21)
 is a forbidden induced graph for co-TT graph. Because now we have no positive or negative interval for the vertex $x'$. It is easy to check that $\mathrm{H_2}$ is a co-TT graph and $\mathrm{H_1}$= $\mathrm{S_3}$. Next, consider the graphs $H_n$, where $n>3$. Similarly, if we join a vertex $x'$ with $x_1$ or $x_n$ then we have co-TT graph (as now we can take a negative interval $I^-(x')$ as before). Finally, if we join $x'$ with any of the vertices $x_2, x_3,..., x_{n-1}$ then every resulting graph will contain $W$ as an induced subgraph.

 \begin{figure}[H]
    \centering
    \begin{tikzpicture}[line cap=round,line join=round,x=1.0cm,y=1.0cm,scale=1.5]
\clip(0.,0.) rectangle (6.,6.);
\draw [line width=.2pt] (2.,4.)-- (4.,4.);
\draw [line width=.2pt] (2.,4.)-- (3.,5.);
\draw [line width=.2pt] (3.,5.)-- (4.,4.);
\draw [line width=.2pt] (2.,4.)-- (2.,2.);
\draw [line width=.2pt] (2.,2.)-- (3,2);
\draw [line width=.2pt] (2.,4.)-- (3,2);
\draw [line width=.2pt] (2.,4.)-- (4.,2.);
\draw [line width=.2pt] (3,2)-- (4.,2.);
\draw [line width=.2pt] (2.,2.)-- (4.,4.);
\draw [line width=.2pt] (3,2)-- (4.,4.);
\draw [line width=.2pt] (4.,2.)-- (4.,4.);
\draw [line width=.2pt] (1,3.0)-- (2.,4.);
\draw [line width=.2pt] (1,3.0)-- (2.,2.);
\draw [line width=.2pt] (4.,4.)-- (5.,3.);
\draw [line width=.2pt] (4.,2.)-- (5.,3.);
\draw [line width=.2pt] (3,2)-- (3,0.7);
\begin{scriptsize}
\draw [fill=black] (2.,4.) circle (1.5pt);
\draw (1.7,4.4) node[anchor=north west,scale=1.5] {$y$};
\draw [fill=black] (4.,4.) circle (1.5pt);
\draw (3.9,4.4) node[anchor=north west,scale=1.5] {$z$};
\draw [fill=black] (3.,5.) circle (1.5pt);
\draw (2.8,5.4) node[anchor=north west,scale=1.5] {$u$};
\draw [fill=black] (2.,2.) circle (1.5pt);
\draw (1.8,2.) node[anchor=north west,scale=1.5] {$x_1$};
\draw [fill=black] (3,2) circle (1.5pt);
\draw (2.95,2) node[anchor=north west,scale=1.5] {$x_2$};
\draw [fill=black] (4.,2.) circle (1.5pt);
\draw (3.8,2) node[anchor=north west,scale=1.5] {$x_3$};
\draw [fill=black] (1,3.0) circle (1.5pt);
\draw (.55,3.2) node[anchor=north west,scale=1.5] {$w$};
\draw [fill=black] (5.,3.) circle (1.5pt);
\draw (5,3.2) node[anchor=north west,scale=1.5] {$v$};
\draw [fill=black] (3,0.7) circle (1.5pt);
\draw (2.9,.8) node[anchor=north west,scale=1.5] {$x'$};

\end{scriptsize}
\end{tikzpicture}\\
Fig. 21. The graph $H'_3$, a forbidden graph for co-TT graph constructed from $H_3$.
   
\end{figure}

 \par We are now ready to present the main result of this paper. As previously mentioned, the forbidden graphs for split co-TT graphs also serve as forbidden graphs for co-TT graphs. Assume $\mathcal{\widetilde{S}}$ represent the forbidden family of split co-TT graphs, and if we denote $S$ as the set of sun graphs, i.e., $S = \{S_k, k \geq 3\}$, then we have $ S \subset \mathcal{\widetilde{S}}$.

 \par The forthcoming figure (Fig. 22) will facilitate the proof of the main result.

\begin{figure}[H]
    \centering
    \begin{tikzpicture}[line cap=round,line join=round,x=1.0cm,y=1.0cm,scale=1.2]
\clip(-6,-7.) rectangle (8,8.);
\draw [line width=0.2pt,color=black] (0.,0.) circle (5.2cm);
\draw [line width=0.2pt,color=red] (0.,0.) circle (3.59cm);
\draw [line width=0.2pt] (1.5215423925667793,-3.5039907084785127)-- (1.6729599335951844,-3.7335779756192693);
\draw [line width=.2pt] (1.6729599335951844,-3.7335779756192693)-- (1.8147272699071277,-3.9417765753278435);
\draw [line width=0.2pt] (1.8147272699071277,-3.9417765753278435)-- (2.2297982210980987,-4.0638562668546);
\draw [line width=0.2pt] (2.2297982210980987,-4.0638562668546)-- (2.5570730597014375,-4.158936901586497);
\draw [line width=0.2pt] (1.8147272699071277,-3.9417765753278435)-- (1.7252021627875067,-4.324292942111679);
\draw [line width=0.2pt] (1.7252021627875067,-4.324292942111679)-- (1.643815701769669,-4.649838786183029);
\draw [line width=0.2pt,color=black] (0.,0.) circle (2.55cm);
\draw [line width=0.2pt] (-0.9006477714987563,-2.9609558690666966)-- (-0.3499055979930673,-2.9554107057257113);
\draw [line width=0.2pt] (-0.3499055979930673,-2.9554107057257113)-- (0.22999769093414124,-2.9435613234908056);
\draw [line width=0.2pt] (0.22999769093414124,-2.9435613234908056)-- (0.7980732349818327,-2.936894918097084);
\draw [line width=0.2pt] (0.7980732349818327,-2.936894918097084)-- (1.412826790094711,-2.908772232339024);
\draw [line width=0.2pt] (0.22999769093414124,-2.9435613234908056)-- (0.20556803086575529,-2.6036107407817894);
\draw [line width=0.2pt] (0.2397593713406453,-3.4402211410511803)-- (0.7980732349818327,-2.936894918097084);
\draw [line width=0.2pt] (-0.3499055979930673,-2.9554107057257113)-- (0.2397593713406453,-3.4402211410511803);
\draw [line width=0.2pt] (0.22999769093414124,-2.9435613234908056)-- (0.2397593713406453,-3.4402211410511803);
\draw [line width=0.2pt,color=black] (0.,0.) circle (1.3cm);
\draw [rotate around={0.:(2.9478171617119764,0.)},line width=0.2pt,color=black] (2.9478171617119764,0.) ellipse (4.810375078109238cm and 2.2562480518700077cm);
\draw [color=black](-2.2022023231098578,4.5) node[anchor=north west,scale=1.] {Strongly Chordal graph};
\draw [color=red](-1.2392432466268948,3.11935748287192) node[anchor=north west,scale=1.] {Co-TT graph};
\draw [color=black](-0.8,0.6) node[anchor=north west,scale=1]{\parbox{2.5204616997099416 cm}{Threshold  graph}};
\draw [line width=0.2pt] (-1.166879479858047,-1.8932440729743507)-- (-0.7760034911473422,-1.9053264742592335);
\draw [line width=0.2pt] (-0.7760034911473422,-1.9053264742592335)-- (-0.41024385994328416,-1.9129464665759848);
\draw [line width=0.2pt] (-1.166879479858047,-1.8932440729743507)-- (-1.184343116153917,-2.1472606009142776);
\draw [line width=0.2pt] (-1.184343116153917,-2.1472606009142776)-- (-0.7912434757808446,-2.164406213028776);
\draw [line width=0.2pt] (-0.7912434757808446,-2.164406213028776)-- (-0.4095927059371437,-2.177425063607144);
\draw [line width=0.2pt] (-0.4095927059371437,-2.177425063607144)-- (-0.41024385994328416,-1.9129464665759848);
\draw [line width=0.2pt] (-0.7760034911473422,-1.9053264742592335)-- (-0.7912434757808446,-2.164406213028776);
\draw [line width=0.2pt] (-0.41024385994328416,-1.9129464665759848)-- (-0.7912434757808446,-2.164406213028776);
\draw [line width=0.2pt] (-0.7760034911473422,-1.9053264742592335)-- (-1.184343116153917,-2.1472606009142776);
\draw (1.320200614551507,0.7373008199930228) node[anchor=north west,scale=1.5] {$P_4$};
\draw [line width=0.2pt] (2.9791731925306792,0.6563732232753479)-- (2.8127721831061807,0.3363712820743883);
\draw [line width=0.2pt] (2.9791731925306792,0.6563732232753479)-- (3.2095745901953703,0.34917135972242674);
\draw [line width=0.2pt] (2.8127721831061807,0.3363712820743883)-- (3.2095745901953703,0.34917135972242674);
\draw [line width=0.2pt] (3.2095745901953703,0.34917135972242674)-- (3.439975987860061,0.11876996205773586);
\draw [line width=0.2pt] (2.8127721831061807,0.3363712820743883)-- (2.6079709407375664,0.016369340873428805);
\draw [line width=0.2pt] (2.9791731925306792,0.6563732232753479)-- (2.9407729595865644,1.19397648449296);
\draw (6.0843139403093245,0.6866187633360249) node[anchor=north west,scale=1.5] {$S_k$};
\draw [color=black](5.1,-0.6866187633360249) node[anchor=north west,scale=1.] {Split graph};
\draw [color=black](-1.3,2) node[anchor=north west,scale=1.] {Interval graph};
\draw [line width=0.2pt] (4.121633276408737,0.5697829072153996)-- (4.628575928217337,0.5987510587473195);
\draw [line width=0.2pt] (4.121633276408737,0.5697829072153996)-- (4.150601427940657,0.04835617964084034);
\draw [line width=0.2pt] (4.150601427940657,0.04835617964084034)-- (4.672028155515218,0.07732433117276029);
\draw [line width=0.2pt] (4.628575928217337,0.5987510587473195)-- (4.672028155515218,0.07732433117276029);
\draw [line width=0.2pt] (4.121633276408737,0.5697829072153996)-- (4.3702917985587355,0.9022006773980903);
\draw [line width=0.2pt] (4.3702917985587355,0.9022006773980903)-- (4.628575928217337,0.5987510587473195);
\draw [line width=0.2pt] (4.121633276408737,0.5697829072153996)-- (3.828890553725365,0.2965653865675416);
\draw [line width=0.2pt] (3.828890553725365,0.2965653865675416)-- (4.150601427940657,0.04835617964084034);
\draw [line width=0.2pt] (4.628575928217337,0.5987510587473195)-- (4.150601427940657,0.04835617964084034);
\draw [line width=0.2pt] (4.121633276408737,0.5697829072153996)-- (4.672028155515218,0.07732433117276029);
\draw [line width=0.2pt] (4.628575928217337,0.5987510587473195)-- (4.970848454734126,0.6162432352297802);
\draw [line width=0.2pt] (4.672028155515218,0.07732433117276029)-- (5.017127941065096,0.10716888558911593);
\draw [line width=0.2pt] (4.121633276408737,0.5697829072153996)-- (5.017127941065096,0.10716888558911593);
\draw [line width=0.2pt] (4.150601427940657,0.04835617964084034)-- (4.169072116515638,-0.28401252321621184);
\draw [line width=0.2pt] (4.672028155515218,0.07732433117276029)-- (4.690905902865687,-0.27814922224598665);
\begin{scriptsize}
\draw [fill=black] (1.5215423925667793,-3.5039907084785127) circle (1.0pt);
\draw [fill=black] (1.6729599335951844,-3.7335779756192693) circle (1.0pt);
\draw [fill=black] (1.8147272699071277,-3.9417765753278435) circle (1.0pt);
\draw [fill=black] (2.2297982210980987,-4.0638562668546) circle (1.0pt);
\draw [fill=black] (2.5570730597014375,-4.158936901586497) circle (1.0pt);
\draw [fill=black] (1.7252021627875067,-4.324292942111679) circle (1.0pt);
\draw [fill=black] (1.643815701769669,-4.649838786183029) circle (1.0pt);
\draw [fill=black] (-0.9006477714987563,-2.9609558690666966) circle (1.0pt);
\draw [fill=black] (-0.3499055979930673,-2.9554107057257113) circle (1.0pt);
\draw [fill=black] (0.22999769093414124,-2.9435613234908056) circle (1.0pt);
\draw [fill=black] (0.7980732349818327,-2.936894918097084) circle (1.0pt);
\draw [fill=black] (1.412826790094711,-2.908772232339024) circle (1.0pt);
\draw [fill=black] (0.20556803086575529,-2.6036107407817894) circle (1.0pt);
\draw [fill=black] (0.2397593713406453,-3.4402211410511803) circle (1.0pt);
\draw [fill=black] (-1.166879479858047,-1.8932440729743507) circle (1.0pt);
\draw [fill=black] (-0.7760034911473422,-1.9053264742592335) circle (1.0pt);
\draw [fill=black] (-0.41024385994328416,-1.9129464665759848) circle (1.0pt);
\draw [fill=black] (-1.184343116153917,-2.1472606009142776) circle (1.0pt);
\draw [fill=black] (-0.7912434757808446,-2.164406213028776) circle (1.0pt);
\draw [fill=black] (-0.4095927059371437,-2.177425063607144) circle (1.0pt);
\draw [fill=black] (2.9791731925306792,0.6563732232753479) circle (1.0pt);
\draw [fill=black] (2.8127721831061807,0.3363712820743883) circle (1.0pt);
\draw [fill=black] (3.2095745901953703,0.34917135972242674) circle (1.0pt);
\draw [fill=black] (3.439975987860061,0.11876996205773586) circle (1.0pt);
\draw [fill=black] (2.6079709407375664,0.016369340873428805) circle (1.0pt);
\draw [fill=black] (2.9407729595865644,1.19397648449296) circle (1.0pt);
\draw [fill=black] (4.121633276408737,0.5697829072153996) circle (1.0pt);
\draw [fill=black] (4.628575928217337,0.5987510587473195) circle (1.0pt);
\draw [fill=black] (4.150601427940657,0.04835617964084034) circle (1.0pt);
\draw [fill=black] (4.672028155515218,0.07732433117276029) circle (1.0pt);
\draw [fill=black] (4.3702917985587355,0.9022006773980903) circle (1.0pt);
\draw [fill=black] (3.828890553725365,0.2965653865675416) circle (1.0pt);
\draw [fill=black] (4.970848454734126,0.6162432352297802) circle (1.0pt);
\draw [fill=black] (5.017127941065096,0.10716888558911593) circle (1.0pt);
\draw [fill=black] (4.169072116515638,-0.28401252321621184) circle (1.0pt);
\draw [fill=black] (4.690905902865687,-0.27814922224598665) circle (1.0pt);
\end{scriptsize}
\end{tikzpicture}\\
Fig. 22. The hierarchy of certain classes of graphs related to co-TT graphs together with separating examples.
   
\end{figure}
 Now we are well equipped to present the main result of this paper.\\
 \begin{theo}
     A chordal graph $G$ is a co-TT graph if and only if  $G$ does not contains any graph of the infinite families $\mathcal{\widetilde{S}}, H^*_n (n\geq 2)$ or any of the graphs $T, W, H'_3$ and $T_0$ as an induced subgraph.
 \end{theo}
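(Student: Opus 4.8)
I would prove the two implications separately, with the reverse one carrying essentially all the weight. For the forward direction, recall that the co-TT graphs coincide with the signed-interval graphs and that this class is hereditary: deleting a vertex simply removes its positive or negative interval, so an induced subgraph of a co-TT graph is again co-TT. Hence it suffices that every graph in the list is \emph{not} co-TT, and each case has already been verified above --- the members of $\mathcal{S}$ by the Golumbic--Weingarten--Limouzy construction together with the signed-interval / Ferrers-dimension-$2$ equivalence (and, for the suns $S_k$, because co-TT graphs are strongly chordal), the graphs $H^*_n$ by Lemma~18, $T$ by Theorem~14, $W$ by Lemma~13, $H'_3$ by the observation after Lemma~18, and $T_0$ by Lemma~16. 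So a co-TT graph contains none of them.

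For the reverse direction, assume $G$ contains no induced subgraph from the list; the goal is a signed-interval representation of $G$. First I would reduce to the strongly chordal case: an induced chordless cycle $C_k$ $(k\ge 4)$ in $G$ forces an induced member of $\mathcal{S}$, and an induced trampoline is a sun $S_k\in\mathcal{S}$, so by Faber's characterisation $G$ is strongly chordal. If $G$ is already an interval graph we are done; otherwise, by the Lekkerkerker--Boland list $G$ contains an induced $T$, $W$, $T_n$, or $H_n$, and since $T$, $W$ and $H_1=S_3$ are all forbidden, $G$ has an induced $T_n$ $(n\ge 1)$ or $H_n$ $(n\ge 2)$. Fix such a subgraph $F$ with the largest possible spine parameter $n$.

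The heart of the proof is then a local analysis around $F$. Using the explicit geometric representations of $T_n$ and $H_n$ from Lemmas~15 and~17 --- spine vertices realised by positive intervals in left-to-right order, apex vertices by long positive intervals covering an initial and a final block of the spine, and the top vertex $u$ by a short negative interval wedged strictly between two consecutive spine intervals --- I would fix a canonical representation of $F$ and show that every remaining vertex $w$ of $G$ can be given a positive or negative interval compatible with it, by casework on the trace of $N(w)$ on $V(F)$. The forbidden graphs are exactly the obstructions produced by a ``bad'' attachment: a pendant-type neighbour at an end or the apex of a $T_n$ produces $T_0$ (or, for a longer spine, $T$); a pendant at the top vertex $u$ of an $H_n$ produces $H^*_n$; a neighbour of an internal spine vertex produces $H'_3$ or a copy of $W$; a vertex forcing an apex interval to swallow the whole spine produces $T$ or $W$; and an attachment that completes a partite side to a clique reproduces a member of $\mathcal{S}$. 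In every case not creating a forbidden graph, the interval for $w$ can be inserted, and iterating over $V(G)\setminus V(F)$ assembles a signed-interval representation of all of $G$, so $G$ is co-TT.

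I expect the main obstacle to be the completeness of this case analysis, carried out uniformly in the spine length $n$: one must show that in the strongly chordal, non-interval regime \emph{every} way for $G$ to fail to be co-TT localises near a maximal $T_n$ or $H_n$ to one of $T$, $W$, $H^*_n$, $H'_3$, $T_0$ or an $\mathcal{S}$-graph, and that two far-apart non-interval witnesses cannot conspire into an obstruction outside the list. Establishing that the canonical representation of $F$ always extends --- rather than merely that no small forbidden pattern is visible --- is the delicate point, and it is precisely here that the geometric slack afforded by negative intervals has to be exploited. (Alternatively, the Feder--Hell--Huang edge-asteroid / long-cycle characterisation of Ferrers-dimension-$2$ bigraphs, applied to the symmetric biadjacency matrix $A(G)$, offers another way to organise this bookkeeping.)
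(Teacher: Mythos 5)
Your overall strategy is essentially the paper's own. The forward direction is identical (co-TT graphs form a hereditary class, so it suffices that the listed graphs are not co-TT, which is Lemmas 13, 14, 16, 18 together with the observations about $H'_3$ and the family $\mathcal{S}$). For the converse the paper likewise reduces to a strongly chordal, non-interval graph, invokes the Lekkerkerker--Boland list to find an induced $T$, $W$, $T_n$ or $H_n$, and then claims that a failure of co-TT-ness in the presence of a $T_n$ (resp.\ $H_n$) forces an induced $T$ or $T_0$ (resp.\ $H^*_n$ or $H'_3$). One structural difference: the paper handles split graphs as a separate case, quoting the Golumbic--Weingarten--Limouzy characterization of split co-TT graphs to produce a member of $\mathcal{S}$, whereas you try to absorb that case into your local extension argument.

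The genuine gap is the one you flag yourself: the completeness of the case analysis --- that every attachment of a vertex of $G$ to a (maximal) induced $T_n$ or $H_n$ either admits a compatible positive or negative interval or creates one of $T$, $W$, $T_0$, $H^*_n$, $H'_3$ or an $\mathcal{S}$-graph --- is never carried out, and that analysis is the entire mathematical content of the converse; as written your text is a plan, not a proof. (In fairness, the paper is also very terse at exactly this point, asserting these implications in a sentence each without the supporting casework.) There is also a concrete false step in your reduction to strong chordality: an induced chordless cycle $C_k$ ($k\geq 4$) does \emph{not} force a member of $\mathcal{S}$ --- for instance $C_4$ and $C_5$ contain none of the listed graphs, since every graph on the list has at least six vertices, yet they are not co-TT. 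The paper sidesteps this by explicitly assuming $G$ chordal at the start of the converse, so either chordality must be added as a hypothesis (equivalently, the chordless cycles must be added to the obstruction list) or your argument, like the paper's, is incomplete for non-chordal inputs; your claim that trampolines lie in $\mathcal{S}$ (as the suns $S_k$) is, by contrast, correct and does give strong chordality once chordality is in hand.
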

 \begin{proof}
     Let $G$ be a co-TT graph. From Lemmas 13, 14, 16 and 20,  $G$ does not contain $W, T, T_0$ and any graph of the family $H^*_n (n\geq 2$)  as an induced subgraph. Also we have observed that $G$ does not contain $H_3'$ and any graph of the family $\mathcal{\widetilde{S}}$ as an induced subgraph.\\
     For the converse, we assume that $G$ is a chordal graph. Now, we shall show that if $G$ is not a co-TT graph then $G$ must contain one of the graphs of the infinite families $\mathcal{\widetilde{S}}, H^*_n (n\geq 2)$ or one of the graphs $T, T_0, W, H'_3$ as an induced subgraph. We consider two possibilities:\\
     $(a)$ $G$ is a split graph. If $G$ is not a co-TT graph then $G$ contains a forbidden graph of split co-TT graphs i.e. any graph of the family $\mathcal{\widetilde{S}}$ as an induced subgraph.\\
     $(b)$ $G$ is a strongly chordal graph but not a split graph. Now such a strongly chordal graph may be an interval graph (see Fig. 22). If $G$ is an interval graph, then $G$ is not a forbidden graph for co-TT graphs. Thus $G$ must not be an interval graph. Hence by Theorem 3, $G$ contains an asteroidal triple of vertices. Then by Theorem 4, $G$ must contains any of the graphs $T$, $W$ or any graph of the families $T_n$ or $H_n$ (see Fig. 4) as an induced subgraph. Next, if $G$ is not a co-TT graph and contains any graph of the family $T_n$ then from \cref{obs17} and proof of the Lemma 16, it follows that  $G$ must contains $T$ or $T_0$ as an induced subgraph. Finally, $G$ is not a co-TT graph and contains any graph of the family $H_n$, then from \cref{obs19} and \cref{obs21} it must contains a graph of the family $H^*_n (n\geq 2)$ or $H'_3$ as an induced subgraph. This completes the proof of the theorem.
 \end{proof}
 \section{Conclusion}
In this paper, we have successfully addressed the open problem of characterizing co-TT graphs in terms of forbidden induced subgraphs originally posed by Monma, Reed, and Trotter \cite{24}. Additionally, they provided a recognition algorithm for co-TT graphs with a runtime of O($n^4$). P.A. Golovach et al. \cite{11} introduced an O($n^2$) algorithm for recognizing threshold tolerance graphs and their complements, namely co-TT graphs. We anticipate that our findings will contribute to the development of a more efficient algorithm for recognizing co-TT graphs.

\end{document}